\title{Enumerating models of DNF faster: breaking the dependency on the formula size }
\author[1]{Florent Capelli}
\affil[1]{Universit\'{e} de Lille, CRIStAL Laboratory, France, florent.capelli@univ-lille.fr}
\author[2]{Yann Strozecki}
\affil[2]{Universit\'{e} de Versailles Saint-Quentin-en-Yvelines, DAVID Laboratory, France}
\newtheorem{theorem}{Theorem}
\newtheorem{lemma}[theorem]{Lemma}
\newtheorem{proposition}[theorem]{Proposition}
\theoremstyle{definition}
\newcommand{\strongpdelay}{{strong polynomial delay}}
\newcommand{\sat}{\mathsf{sat}}
\newcommand{\var}{\mathsf{var}}
\newcommand{\Output}{\mathtt{Output}}
\newcommand{\enumDNF}{{\sc EnumDNF}}
\newcommand{\enum}[1]{\textsc{Enum}\smash{\cdot}#1}
\newcommand{\one}{\mathbf{1}}
\newcommand{\zero}{\mathbf{0}}
\begin{document}
\maketitle

\begin{abstract}
In this article, we study the problem of enumerating the models of DNF formulas. The aim is to provide enumeration algorithms with a delay that depends polynomially on the size of each model and not on the size of the formula, which can be exponentially larger. We succeed for two subclasses of DNF formulas: we provide a constant delay algorithm for $k$-DNF with fixed $k$ by an appropriate amortization method and we give a quadratic delay algorithm for monotone formulas. We then focus on the \emph{average delay} of enumeration algorithms and show how to obtain a sublinear delay in the formula size. 

\end{abstract}

\newpage

% \usepackage{microtype}%if unwanted, comment out or use option "draft"

%\graphicspath{{./graphics/}}%helpful if your graphic files are in another directory

% 
% 
% \usepackage{amssymb,amsbsy,amsmath,amsfonts,amssymb,amscd,amsthm}
% \usepackage[ruled,vlined]{algorithm2e}
% \usepackage{graphicx}
% \usepackage{complexity}
% 
% \newcommand{\florent}[1]{\todo{Florent: #1}}
% \newcommand{\inflorent}[1]{\todo[color=orange!40,inline]{Florent: #1}}
% 
% \newcommand{\yann}[1]{\todo[color=green!40]{Yann: #1}}
% \newcommand{\inyann}[1]{\todo[color=green!40,inline]{Yann: #1}}

\section{Introduction}

An enumeration problem is the task of listing a set of elements without redundancies, usually corresponding to the solutions of a search problem, such as enumerating the spanning trees of a graph or the satisfying assignments of a formula. One way of measuring the complexity of an enumeration algorithm is the \emph{total time} needed to compute all solutions. When the total time depends both on the \emph{input and output}, an algorithm is called \emph{output sensitive}. It is considered tractable and said to be \emph{output polynomial} when it can be solved in polynomial time in the size of the input and the output. 

Output sensitivity is relevant when \emph{all} elements of a set must be
generated, for instance to build a library of interesting objects to be studied
by experts, as it is done in biology, chemistry or network
analytics~\cite{barth2015efficient,andrade2016enumeration,bohmova2018computing}.
However, when the output is large with regard to the input, output polynomiality
is not enough to capture tractability. Indeed, if one wants only a good solution
or some statistic on the set of solutions, it can be enough to generate only a
fraction of the solutions. A good algorithm for this purpose must guarantee that
the more time it has, the more solutions it generates. To measure the efficiency
of such an algorithm, we need to evaluate the {\em delay} between two
consecutive solutions. A first guarantee that we can expect is to have a good
{\em average delay} (sometimes referred to as {\em amortized delay} or {\em time
  per solutions}), that is, to measure the total time divided by the number of
solutions. There are many enumeration algorithms which are in constant amortized
time or CAT, usually for the generation of combinatorial objects, such as the
unrooted trees of a given size~\cite{wright1986constant}, the linear extensions
of a partial order~\cite{pruesse1994generating} or the integers by Gray
code~\cite{knuth1997art}. Uno also proposed in~\cite{uno2015constant} a general
method to obtain constant amortized time algorithms, which can be applied, for
instance, to find the matchings or the spanning trees of a graph.

However, when one wants to process a set in a streaming fashion such as the
answers of a database query, it may be interesting to guarantee that we have a
good delay between the output of two new solutions, usually by bounding it by a
polynomial in the input size. We refer to such algorithms as \emph{polynomial
  delay} algorithms. Many problems admit such algorithms, e.g. enumeration of
the cycles of a graph~\cite{read1975bounds}, the satisfying assignments of
some tractable variants of $\SAT$~\cite{creignou1997generating} or the spanning trees and
connected induced subgraphs of a graph~\cite{avis1996reverse}. All polynomial
delay algorithms are based on few methods such as \emph{backtrack search} or
\emph{reverse search}, see~\cite{mary2013enumeration} for a survey.

For some problems, the size of the input may be much larger than the size of
each solution, which makes polynomial delay an unsatisfactory measure of
efficiency. In that case, algorithms whose delay depends on the size of a single
solution are naturally more interesting than polynomial delay or output
polynomial algorithms. We say that an algorithm is in \emph{\strongpdelay} when
the delay between two consecutive solutions is polynomial in the size of the
last solution. To make this notion robust and more relevant, a precomputation
step is allowed before the enumeration, in time polynomial in the input size, so
that the algorithm can read the whole input and set up useful data structures.
Observe that the notion of {\strongpdelay} is also well suited for the
enumeration of infinite sets where the size of the solutions may grow
arbitrarily as in~\cite{florencio2015naive}.

There are few examples of {\strongpdelay} algorithms in the literature. A folklore example is the enumeration of the paths in a DAG, which is in delay linear in the size of the generated path. More complex problem can then be reduced to generating paths in a DAG, such as enumerating the minimal dominating sets in restricted classes of graphs~\cite{golovach2018output}.

\emph{Constant delay} algorithms also naturally fall into  {\strongpdelay} algorithms and a whole line of research is dedicated to design such algorithm for enumerating models of first order queries for restricted classes of structures~\cite{SchweikardtSV18} (see also the survey~\cite{segoufin2013enumerating}). However, while these algorithms are called constant delay because their delay does not depend on the database size, it often depends more than exponentially on the size of the solutions. Other examples naturally arise from logic such as the enumeration of assignments of MSO queries over trees or bounded width graphs~\cite{bagan2006mso,courcelle2009linear}.

In this paper, we focus on the problem of enumerating the models of DNF formulas
that we denote by \enumDNF. More precisely, we would like to understand whether
{\enumDNF} is in {\strongpdelay} or not. The complexity of counting the models
of a DNF is well understood: while exactly counting the number of models of a
DNF is a canonical $\#\P$-complete problem, it is known that approximating this
number can be approximated in randomized polynomial
time~\cite{karp1983monte,meel2018not}. This result exploits the fact that the
structure of the models of a DNF is very simple: it is the \emph{union} of the
models of its terms. This structure also suggests that {\enumDNF} is easy to
solve and it is folklore that the models of a DNF $D$ can be enumerated with a
delay linear in the size of $D$ (see Proposition~\ref{prop:classical_flashlight}).
This algorithm has polynomial delay but not a {\strongpdelay} since the number
of terms of a DNF may be super-polynomial in the number of variables.

The main difficulty to obtain a {\strongpdelay} algorithm for the problem
{\enumDNF} is to deal with the fact that the models are repeated many times
since \emph{the union is not disjoint}. Solution repetitions because of non
disjoint union is a common problem in enumeration and this issue appears in its
simplest form when solving {\enumDNF}. We hope that understanding finely the
complexity of this problem and giving better algorithm to solve it will shed
some light on more general problems. In fact, it turns out that the algorithms
we propose in this article may be directly applied to get better algorithms for
generating union of sets~\cite{mary2016efficient,dmtcs:5549} and models of
existential second order queries~\cite{durand2011enumeration}. A recent result
by Amarilli et al.~\cite{AmarilliBJM17} naturally illustrates the link between
{\strongpdelay} and disjoint unions: they give a {\strongpdelay} algorithm to
enumerate the models of Boolean circuits known as structured d-DNNF used in
knowledge compilation. These circuits may be seen as way of generating a set of
solutions by using only Cartesian products and \emph{disjoint} unions of ground
solutions. Many known enumeration algorithm such as the enumeration of the
models of an MSO formula on bounded treewidth databases can be reduced to the
enumeration of the solutions of such circuits. This suggests that the boundary
between {\strongpdelay} and polynomial delay may be related to the difference
between union and disjoint union of solutions.

% Moreover, it turns out that many problem
% solvable in {\strongpdelay} can be in fact reduced to a restricted class of
% Boolean circuits known as structured d-DNNF used in knowledge compilation. These
% circuits may seem more general than DNF, but they are deterministic: the models
% coming from the two subcircuits at an OR gate are disjoint. Hence the models can
% be seen as disjoint union of models of simpler circuits, which allows Amarilli
% et al.~\cite{AmarilliBJM17} to give an algorithm to enumerate the solutions of a
% d-DNNF with delay linear in the number of variables and hence independent of the
% size of the circuit. This suggests that the boundary between {\strongpdelay} and
% polynomial delay is related to the difference between union and disjoint union
% of solutions.

We thus think that {\enumDNF} is a reasonable candidate to separate
{\strongpdelay} from polynomial delay and our working hypothesis are the
following two conjectures:

\begin{enumerate}
 \item \emph{DNF Enumeration Conjecture:} {\enumDNF} is not in {\strongpdelay}.
 \item \emph{Strong DNF Enumeration Conjecture:} There is no algorithm solving
   the problem {\enumDNF} in delay $o(m)$ where $m$ is the number of terms of
   the DNF.
\end{enumerate}

In this paper, we refute stronger forms of these conjectures on restricted
formulas such as monotone DNF and $k$-DNF, or by considering average delay, see
Table~\ref{tab:results} for a summary of the results presented in this paper.
The general case is still open however but we hope to make some progress by
reducing the conjectures to strong hypotheses such as SETH, as it was done
in~\cite{capelli2019incremental} to prove a strict hierarchy in incremental
delay enumeration algorithms.

\begin{table}
  \centering
  \begin{tabular}{|l|l|l|}
    \hline
    {\bf Class} & {\bf Delay} & {\bf Space} \\ \hline
    DNF & $O(\|D\|)$ (Proposition~\ref{prop:classical_flashlight}) & Polynomial \\ \hline
    $(\star)$ DNF & $O(nm^{1-\log_3(2)})$ average delay (Theorem~\ref{thm:best_average_delay}) & Polynomial \\ \hline
    $(\star)$ $k$-DNF  & $2^{O(k)}$ (Theorem~\ref{thm:kdnf}) & Polynomial \\ \hline
    $(\star)$ Monotone DNF & $O(n^2)$ (Theorem~\ref{thm:montonednf}) & Exponential \\ \hline
    $(\star)$ Monotone DNF & $O(\log(nm))$ average delay (Theorem~\ref{th:amortized_improved}) & Polynomial \\ \hline
  \end{tabular}
  \caption{Overview of the results. In this table, $D$ is a DNF, $n$ its number of variables and $m$ its number of terms. New contributions are annotated with $(\star)$.}
  \label{tab:results}
\end{table}

\paragraph*{Organization of the paper}

The paper is organized as follows. We first introduce basic notions on formulas and enumeration complexity then present tries and Gray code in Section~\ref{sec:def}. We show in Section~\ref{sec:classical_alg} how to adapt three generic methods to generate the models of a DNF, the best having a linear delay in the formula size. In Section~\ref{sec:kdnf}, we give a backtrack search algorithm, using a branching scheme which supports a good amortization, which is in constant delay for $k$-DNF. In Section~\ref{sec:average}, we give another backtrack search algorithm, whose average delay is sublinear using a lemma relating the number of models of a DNF with its number of terms. Finally, in Section~\ref{sec:monotone}, we provide a {\strongpdelay} for monotone DNF formulas using exponential memory and we specialize and adapt the algorithm of the previous section to obtain a logarithmic delay for monotone DNF formulas with polynomial memory.

\section{Definitions and notations}
\label{sec:def}

\paragraph*{Terms and DNF-formulas}

Let $X$ be a set of variables and let $n$ and be the size of $X$. We fix some arbitrary order
on $X$ and write $X = \{x_1,\dots, x_n\}$. A {\em literal} $\ell$ is either a variable $x \in X$ or the negation of a variable $\neg x$ for some $x \in X$. A {\em term} $T$ is a finite set of literals such that every two literals in $T$ have a different variable. A {\em Disjunctive Normal Form formula}, DNF for short, is a finite set of terms.
Given a literal $\ell$, we denote its underlying variable by $\var(\ell)$. We extend this notation to terms by denoting $\var(T) := \bigcup_{\ell \in T} \var(\ell)$ for a term $T$ and to DNF by denoting $\var(D) := \bigcup_{T \in D} \var(T)$ for a DNF $D$.

Given an assignment $\alpha : X \rightarrow \{0,1\}$, we naturally extend $\alpha$ to literals by defining $\alpha(\neg x) = 1-\alpha(x)$. An assignment $\alpha$ satisfies a term $T$ if for every $\ell \in T$, $\alpha(\ell) = 1$. A {\em model} $\alpha$ of a DNF $D$ is an assignment of $\var(D)$ such that there exists $T \in D$ such that $\alpha$ satisfies $T$. We write $\alpha \models D$ if $\alpha$ is a model of $D$. It is easy to see that given a term $T$, there exists a unique assignment of variables in $\var(T)$ satisfying $T$. We denote this assignment by $\one_T$. 

Given a DNF $D$ on variables $X$, we denote by $\sat(D) = \{\alpha \mid \alpha \models D\}$ the set of models of $D$.
Let $Y \subseteq X$, $\tau : Y \rightarrow \{0,1\}$ and $\sigma : X \rightarrow \{0,1\}$,  we say that $\sigma$ is {\em compatible} with $\tau$, denoted by $\sigma \simeq \tau$, if the restriction of $\sigma$ to $Y$ is equal to $\tau$. We denote by $\sat(D,\tau) = \{\alpha \mid \alpha \models D, \sigma \simeq \tau\}$ the set of models of $D$ compatible with $\tau$. We also denote by $D[\tau]$ the DNF defined as follows: we remove every term $T$ from $D$ such that there exists a literal $\ell \in T$ such that $\tau(\ell) = 0$. For the remaining terms, we remove every literal whose variable is in $Y$. Observe that since we consider DNF to be sets of terms, by definition, $D[\tau]$ has no duplicated terms. It is clear that: $\sat(D,\tau) = \{\tau \cup \alpha \mid \alpha \models D[\tau] \}$. The {\em size} of a DNF $D$ is denoted by $\|D\|$ and is equal to $\sum_{T \in D} |T|$.

\paragraph*{Enumeration complexity} 

Let $\Sigma$ be a finite alphabet and $\Sigma^*$ be the set of finite words built on $\Sigma$.
Let $A\subseteq \Sigma^{*}\times\Sigma^{*}$ be a binary predicate, we write $A(x)$ for the set of $y$ such that $A(x,y)$ holds. The enumeration problem $\enum{A}$ is the function which associates $A(x)$ to $x$.

The computational model is the random access machine model (RAM) with addition, subtraction and multiplication as its basic arithmetic operations and an operation $\Output(i,j)$ which outputs the concatenation of the values of registers $R_i, R_{i+1}, \dots, R_j$. We assume that all operations are in constant time except the arithmetic instructions which are in time linear in the size of their inputs.

A RAM machine solves $\enum{A}$ if, on every input $x \in \Sigma^{*}$, it produces a sequence $ y_{1}, \dots, y_{n}$ such that $ A(x) = \left\lbrace y_{1}, \dots, y_{n} \right\rbrace $ and for all $i\neq j,\, y_{i} \neq y_{j}$.
The space used by the machine at a given step is the sum of the number of bits required to store the integers in its registers. 

The delay of a RAM machine which outputs the sequence $\left\lbrace y_{1}, \dots, y_{n} \right\rbrace $ is the 
maximum over all $i\leq n$ of the time the machine uses between the generation of $y_i$ and $y_{i+1}$ and between the generation of $y_n$ and when the machine stops. 
Note that we allow a \emph{precomputation phase} before the machine starts to enumerate solutions,
which can be in time polynomial in the size of the input.

Remark that this formalism allows for the enumeration of a solution of unbounded size in constant time.
The later is required to have algorithms with constant delay.  This model is reasonable if we store the deltas between solutions rather than the solutions. It is also relevant when instead of storing solutions
we only need to do a constant time operation on each solution such as counting the solutions or evaluating some measure which depends only on the constant amount of change between two consecutive solutions.

\paragraph*{Trie}\label{par:trie}

A {\em trie} is a data structure used to represent a set of words on an alphabet $M$ which supports efficient insertion and deletion. We refer the reader to~\cite{fredkin1960trie,knuth1997art} for more details. We use them to either store a formula, i.e. the set of the terms of a DNF, or to represent a set of models of a DNF. 

The trie is a $M$-ary tree, each of its inner nodes is labeled by a letter of the alphabet. A leaf represents a word: the sequence of labels on the path from the root to the leaf. The trie represents the set of words represented by its leaves. 

A model $\alpha$ of the DNF is represented in a trie by the sequence of its
values: $\alpha(x_1)\alpha(x_2)\dots\alpha(x_n)$. The trie we use to store these
models are binary trees, since the labels are in $\{0,1\}$.

A term $T$ of a DNF is represented by the sequence in ascending order of its
literals using the order defined as follows: for all $i$, $\bar{x_i} < x_i$ and
$x_i < \bar{x}_{i+1}$.

In this paper, the operation that we need to perform on tries are: searching for an element, inserting a new element and suppress an element. The complexity of these operations and the space used by the 
trie depends on the way the data structure is implemented. We claim that we can implement tries so that these three operations run in time $O(n)$ with no space overhead where $n$ is the size of the longest word stored in the tree. As we will use tries to store models of DNF formulas or terms, $n$ will always be bounded by the number of variables of the DNF formula. To do this, we store the children of each node in a sorted linked list.

  The size of a term, denoted by $k$, may be small with regard to the size of
  the alphabet, denoted by $n$. When storing terms in a trie, using an array to represent the children of each node, we can achieve a complexity of $O(k)$ for all operations. Note that, to obtain this complexity we must either assume an infinite supply of \emph{initialized memory} or use a lazy intialization technique (see~\cite{mehlhorn2013data} 2, Section III.8.1). The space usage can be multiplied by a factor $n$ because we use a size $n$ array to represent the children relation 
  which can be empty. There are better way to implement tries with a low space overhead but we do not try to optimize this parameter in this article.

% 
% To search an element in the trie, it is enough to follow the path whose sequence of labels is the element.
% When an element of size $n$ is searched into a binary trie, it is done in time $O(n)$ since at each step we can go down one level in the tree in constant time. When inserting in a trie of terms, at a node $v$ of label $x_i$ we need to traverse the list of its children
% to find say $x_l$ the next letter in the word we are looking for. The terms are encoded in ascending order of literals, thus $l > i$.
% Moreover the list of children of $v$ is ordered, hence we need at most a time $O(l-i)$ to traverse it before finding $x_l$. The time to traverse a branch of the trie is thus bounded by the number of variables. The insertion and suppression of an element is similar to the search of an element and can also be done in time $O(n)$. 
% 

\paragraph*{Gray Code}

Gray codes are an efficient way to enumerate the integers between $0$ and $2^n
-1$ written in binary, or equivalently the subsets of a set of size $n$. They
enjoy two important properties: the Hamming distance of two consecutive elements in the Gray enumeration order is one and each new element is produced in constant time using only additional $O(\log(n)^2)$ space (see~\cite{knuth2011combinatorial}). In other words, this enumeration algorithm has constant delay. We can generate all models of a term in constant delay, using Gray code and an additional array which contains the indexes of the free variables of the term.

\begin{proposition} \label{prop:Gray}
 The models of a term $T$ on variables $X$ can be enumerated in constant delay.
\end{proposition}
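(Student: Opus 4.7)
The plan is to reduce the enumeration of models of a term $T$ on $X$ to the enumeration of all binary assignments of the free variables $X \setminus \var(T)$, and then use the Gray code enumeration mentioned above to get constant delay.

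First, during the precomputation phase (which is linear in $|X|$ and $|T|$), I would build two data structures. The first is an array $A$ of length $n$ holding the current assignment of $X$: positions corresponding to $\var(T)$ are initialized to the values prescribed by $\one_T$, and positions corresponding to free variables $X \setminus \var(T)$ are initialized to $0$. The second is an array $F$ of length $k := n - |\var(T)|$ storing the indices in $X$ of the free variables, so that $F[i]$ is the position in $A$ of the $i$-th free variable. Clearly $A$ then represents the lexicographically first model of $T$, and it can be output (as a delta from the previous state, per the RAM model discussed in the trie/output paragraph).

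Then, to enumerate the remaining $2^k - 1$ models, I would run the standard Gray code enumeration of the integers $0,1,\dots,2^k-1$. As stated in the paragraph on Gray codes, this generates at each step a single bit index $i \in \{0,\dots,k-1\}$ whose value flips, in constant time and using $O(\log^2 k)$ extra space. For each such $i$, I flip $A[F[i]]$ in constant time and output the update. Since each Gray codeword corresponds bijectively to an assignment of the free variables, and each such assignment combined with $\one_T$ gives a distinct model of $T$, this enumerates $\sat(T)$ exactly once per model with constant delay.

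The only subtle point is that the per-model output is constant time only because our RAM model outputs deltas (a single flipped coordinate) rather than rewriting the full assignment each time; this is precisely the convention justified in the enumeration-complexity paragraph. Everything else is routine: the Gray code does the heavy lifting, and the array $F$ translates between the ``free variable index'' used by the Gray code generator and the actual variable position in $X$ used to maintain $A$.
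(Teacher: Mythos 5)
Your proof is correct and follows essentially the same approach as the paper: initialize the assignment to the unique model of $T$ on $\var(T)$, keep a translation array from free-variable indices to positions, and drive the enumeration with a Gray code that flips one bit per step. The only cosmetic difference is that the paper invokes the constant-time $\Output(1,n)$ instruction of its RAM model rather than emitting deltas, but both conventions are explicitly sanctioned in the model description.
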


\begin{proof}

  The models of a term $T$ on variables $X$ are exactly the assignments of the form $\one_T \cup \tau$ for any $\tau : X \setminus \var(T) \rightarrow \{0,1\}$. Enumerating the models of $T$ intuitively boils down to enumerating all assignments on variables ${X \setminus \var(T)}$. We use a Gray code to do this in constant time.
  
  More precisely, we represent a model of $T$  in $n$ registers $R_1,\dots,R_n$ where $R_i$ holds the value of $x_i$ in this model. We initialize the registers such that $R_i = 1$ if $x_i \in T$ and $R_i = 0$ otherwise, that is, if $\neg x_i \in T$ or $x_i \notin \var(T)$.
  
Now let $k = |X \setminus \var(T)|$ and $\sigma: \{1,\dots,k\} \rightarrow \{1,\dots, n\}$ be such that $X \setminus \var(T) = \{x_{\sigma(1)}, \dots, x_{\sigma(k)}\}$ with $\sigma(1) < \dots < \sigma(k)$. We start by storing the values of $\sigma$ in an array of size $k$. We then execute $\Output(1,n)$ which outputs the first model of $T$. After that, we run a Gray code enumeration on the subsets of a set of size $k$. For each new element generated, the Gray code algorithm flips a bit at some position $i$. We thus switch the bit of $R_{\sigma(i)}$ and call $\Output(1,n)$. This generates all solutions of $T$ since all possible values of the variables not in $T$ are set, while the other have the only allowed value by $T$. The delay between two solutions is constant since we only look the value of $\sigma(i)$ in an array and switch the value of a register between two outputs.
\end{proof}

\section{Classical enumeration algorithms}\label{sec:classical_alg}

In this section, we solve {\enumDNF} thanks to three generic enumeration
methods. The best one has a linear delay in the size of the instance, and we
study in later sections several restrictions to obtain a delay polynomial in the
size of a solution. This algorithms presented in this section are not
{\strongpdelay} and most of them are folklore or adapted from other algorithms.
We recall them in this paper as some of our improved algorithms rely on
improvement of these algorithms.

\subsection{Union of terms}

The first two methods are based on the fact that the models of a DNF formula are the union of the models 
of its terms. The main difficulty is to avoid repetitions. We first use a method to enumerate the union of sets of elements which preserves polynomial delay (see~\cite{strozecki2010enumeration}). It relies on a priority rule between the sets to avoid repetitions, as recalled in the proof of the following proposition.

\begin{proposition}[Adapted from proposition $2.38$ and $2.40$ in~\cite{strozecki2010enumeration}]\label{prop:enum_union}
 The models of a DNF formula $D$ with $m$ terms can be enumerated with delay $O(m\|D\|)$.
\end{proposition}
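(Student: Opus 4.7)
My plan is to apply the classical priority rule for enumerating unions without duplicates. First, note the disjoint decomposition
\[
  \sat(D) \;=\; \bigsqcup_{i=1}^{m} A_i, \qquad A_i \;:=\; \sat(T_i) \setminus \bigcup_{j<i} \sat(T_j),
\]
so it suffices to enumerate each $A_i$ exactly once. Each $\sat(T_i)$ admits constant-delay enumeration by Proposition~\ref{prop:Gray}, but we must avoid re-outputting a model produced while handling some earlier $T_j$.

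The algorithm I have in mind maintains, in parallel, one enumerator $E_i$ per term that produces the elements of $\sat(T_i)$ in lexicographic order on $\{0,1\}^n$. Each $E_i$ advances to its next model in $O(n)$ worst-case time by walking the sorted list of free-variable indices of $T_i$ as a binary counter. At every iteration, let $a_i$ be the current candidate held by $E_i$; I compute the lex-smallest $a^* := \min_i a_i$, output it and attribute it by priority to the smallest index $i$ with $a_i = a^*$, and then advance every enumerator currently positioned at $a^*$. Correctness follows by induction on the iteration: the invariant is that after $k$ iterations the outputs are exactly the $k$ lex-smallest elements of $\sat(D)$, and each $a_i$ is the smallest element of $\sat(T_i)$ strictly greater than the last output. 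In particular every $\alpha \in \sat(D)$ is output exactly once, at the moment it first becomes the lex-smallest pending candidate.

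For the delay, one iteration costs $O(mn)$ to compute the minimum of $m$ strings of length $n$ and $O(mn)$ to advance the at most $m$ matching enumerators; since we may assume $n \le \|D\|$ (variables occurring in no term can be factored out via a Cartesian product enumerated by a Gray code), this is $O(m\|D\|)$. The step I expect to be the main obstacle is justifying that synchronizing the advances of all enumerators currently at $a^*$ does not skip any model of $D$: the invariant is what controls this, since each $E_i$ only moves to its own next lex element in $\sat(T_i)$, so no model of $\sat(T_i)$ lying strictly between $a^*$ and the enumerator's new position can have been missed, and any $\beta \in \sat(D)$ not yet output must still be pending in the enumerator of some $T_j$ with $\beta \in \sat(T_j)$.
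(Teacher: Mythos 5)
Your proof is correct, but it is not the paper's proof of this proposition --- it is essentially the paper's proof of the \emph{next} one (Proposition~\ref{prop:ordered_union}). The paper establishes the $O(m\|D\|)$ bound with a priority rule on arbitrary Gray-code enumerators: when term $C_i$ produces a model, the model is output only if it satisfies no $C_j$ with $j>i$; the membership tests cost $O(\|D\|)$ each and are performed $O(m)$ times per output, whence $O(m\|D\|)$. You instead synchronize the per-term enumerators in lexicographic order and merge the $m$ ordered streams, outputting the minimum pending candidate and advancing all enumerators sitting on it; your invariant (after $k$ iterations the $k$ lex-smallest models have been output, and each $E_i$ holds the smallest unoutput element of $\sat(T_i)$) is exactly the induction the paper uses for Proposition~\ref{prop:ordered_union}. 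Your route buys a genuinely stronger bound, $O(mn)$ per output, and the stated $O(m\|D\|)$ then follows from $|\var(D)|\leq\|D\|$ together with your Gray-code treatment of variables outside $\var(D)$; the price is that you give up the constant-delay Gray-code enumeration of each term (Proposition~\ref{prop:Gray}) in favour of $O(n)$-time lexicographic successors, and you must pay $O(mn)$ to locate the minimum (the paper shaves nothing here either: its trie-based merge also costs $O(mn)$ per step because of the reinsertions). So the argument is sound and in fact subsumes the statement, but it should be read as a proof of Proposition~\ref{prop:ordered_union} rather than an adaptation of the priority-rule technique the proposition is attributed to.
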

\begin{proof}
 As explained in Proposition~\ref{prop:Gray}, given a term $C \in D$, $\sat(C)$ can be enumerated in constant delay using Gray code enumeration. The terms of $D$ are indexed from $C_1$ to $C_m$ in an arbitrary order. During the algorithm, the state of the enumeration by Gray Code for each term is maintained so that we can query in constant time the next model of a given term.
 At each step, the algorithm does a loop from $C_1$ to $C_m$. For a term $C_i$ the next model is generated and the algorithm tests whether it is a model of some $C_j$ with $j >i$. If not, it is output. If a term has no more models we skip it.
 
 By this method, we guarantee that each model is output when generated by the term of largest index it satisfies, hence all models are generated and without repetitions. Moreover, at each step of the algorithm, the model given by the last term which has still models will be output, therefore the delay is bounded by the time to execute one step of the algorithm. 
 
 The cost of the generating new models at each step is bounded by $O(m)$ since each solution is produced in $O(1)$. The cost of testing whether a model satisfies some term of larger index is bounded by $\|D\|$ and it is done at most $m$ times before outputting a solution which implies that the delay is $O(m\|D\|)$.
\end{proof}

By improving the way we test whether a model of a term is the model of any term of larger index, we can drop the delay 
to $O(m^2)$. In fact, by generating the solutions of each term in the same order, we can avoid completely the redundancy test and get a better delay. The following algorithm merges several ordered arrays which are generated dynamically by enumeration procedures.

\begin{proposition}[Adapted from proposition $2.41$ in~\cite{strozecki2010enumeration}]\label{prop:ordered_union}
 The models of a DNF formula $D$ with $m$ terms  and $n$ variables can be enumerated with delay $O(mn)$.
\end{proposition}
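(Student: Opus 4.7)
The plan is to generate the models of each term in a common fixed total order and then merge the $m$ sorted streams while discarding duplicates, so that no explicit redundancy test against previously output models is needed.

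First I would enumerate, for each term $C_i$, the set $\sat(C_i)$ in lexicographic order on the assignment string $\alpha(x_1)\alpha(x_2)\dots\alpha(x_n)$, with delay $O(n)$ per model. Concretely, the variables of $C_i$ are forced by $\one_{C_i}$ and the other $k_i = n - |\var(C_i)|$ variables are free; enumerating $\sat(C_i)$ in lex order reduces to incrementing a binary counter on the free coordinates while leaving the forced coordinates untouched. This is done in one pass over the $n$ positions per advance, yielding delay $O(n)$, and the initial model (free positions all $0$) is produced in $O(n)$.

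Second I would perform the merge. Maintain, for each term $C_i$ that is not yet exhausted, the current head $\alpha_i$ of its enumeration, stored so that lexicographic comparison of two heads costs $O(n)$. A round proceeds as follows: scan the at most $m$ current heads to find the lex-smallest one $\alpha^{\star}$, output it, and then, for every term $C_i$ whose head equals $\alpha^{\star}$, advance its local enumerator by one step (if it is exhausted, discard it). The precomputation (reading $D$, building the $m$ initial heads) is polynomial in $\|D\|$.

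For correctness, since every stream is sorted lexicographically and contains exactly the models of the corresponding term, the merge outputs each element of $\bigcup_i \sat(C_i) = \sat(D)$ exactly once, in lex order. For the delay, within a round the scan for the minimum costs $O(mn)$ (at most $m$ comparisons of strings of length $n$), and advancing the streams whose head equals $\alpha^{\star}$ costs at most $m$ calls to the per-term enumerator, each of cost $O(n)$, hence $O(mn)$. Thus the delay between two consecutive outputs is $O(mn)$, as claimed. The main thing to get right is the lex-order enumeration of a single term with delay $O(n)$ so that the merge does not pay an extra factor; everything else is a standard $m$-way merge.
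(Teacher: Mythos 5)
Your proposal is correct and follows essentially the same approach as the paper: both enumerate each term's models in lexicographic order with $O(n)$ delay and then perform an $m$-way merge of these sorted streams, advancing every stream whose head equals the output model so that duplicates never appear. The only difference is an implementation detail --- the paper stores the current heads in a trie (finding and deleting the minimum in $O(n)$, with insertion of up to $m$ new heads costing $O(mn)$), whereas you find the minimum by a direct $O(mn)$ scan --- and both yield the same $O(mn)$ delay.
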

\begin{proof}
 As in the previous algorithm, we run a simple enumeration algorithm on each term and we maintain 
 their states so that we can easily query the next model of a term. We chose to enumerate the models of the terms in lexicographic ascending order (for some arbitrary order of the variables), which can be done with delay $O(n)$ for each term.
 
 The first model of each term which has not yet been output is stored in a
 trie; if all models of a term have been output then nothing is stored for
 this term. Moreover, for each model $\alpha$ in the trie, we maintain the list
 of terms $L$ from which it has been generated. This can be achieved by labeling
 the leaf corresponding to $\alpha$ in the trie with $L$.

  At each step of the algorithm, the smallest model $\alpha$ is found in the trie, then output and removed from the trie all in time $O(n)$. Then we use the list of terms which had $\alpha$ as a model, to generate for each of them their next model and add it to the trie in time $O(mn)$ since the insertion can be done in time $O(n)$ and the number of new models is bounded by $O(m)$.
 The delay of this algorithm is thus bounded by $O(mn)$. By induction, we prove that at each step the smallest non output model is output, which implies that all models are output without repetitions.
\end{proof}

The average delay of the two previous algorithms is lower than their delays. 
Let $r$ be the average number of times a solution is produced during the algorithm, 
we can replace $m$ by $r$ in the average delay of the previous algorithm. It is possible to prove that $r$
is smaller than $m$ by studying how terms share models, but the complexity gains are very small
and we give an algorithm with a better average delay in Section~\ref{sec:average}.

\subsection{Flashlight method}

We present a classical enumeration method called the \emph{Backtrack Search} or
sometimes the \emph{Flashlight
  Method}~\cite{read1975bounds,strozecki2013enumerating}, used in particular to
solve auto-reducible problems. We describe the method in the context of the
generation of the models of a DNF formula.

Given a DNF $D$ on variables $x_1,\dots, x_n$, we define a rooted tree $T_D$
whose root is labeled with $\emptyset$ and nodes of depth $k$ are the
assignments $\tau$ over variables $\{x_1,\dots x_k\}$ such that there exists a
model $\sigma$ of $D$ which is compatible with $\tau$. The children of a node
labeled by $\tau$ are the partial assignments $\tau'$ defined over $\{x_1,\dots
x_{k+1}\}$ which are compatible with $\tau$.

The leaves of $T_D$ are the models we want to generate, therefore a depth first traversal visits all
leaves and thus outputs all solutions. Since a path from the root of the tree is of size  $n$, it is enough to be able to find the children of a node in polynomial time to obtain a polynomial delay. Hence the Flashlight Method has a polynomial delay if and only if the  following extension problem is in $\P$:
given $\tau$ over  $\{x_1,\dots x_k\}$  is there $\sigma$ a model of $D$ compatible with $\tau$?
% In general, when we use the flashlight method, we study the \emph{extension problem} associated to the enumeration problem which is to decide whether a partial solution can be extendended to a full solution.

The extension problem for a DNF is very simple to solve: compute the formula $D[\tau]$ and decide whether it is
satisfiable in time $O(\|D\|)$. This yields an enumeration algorithm with delay $O(n\|D\|)$. 
The delay can be improved by using the fact that we solve the extension problem several times on similar instances as it has been done for other problems such as the enumeration of the models of a monotone CNF~\cite{murakami2014efficient} or the union of sets~\cite{mary2016efficient}.

\begin{proposition}\label{prop:classical_flashlight}
 The models of a DNF formula $D$ can be enumerated with delay $O(\|D\|)$.
\end{proposition}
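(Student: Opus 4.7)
The plan is to take the flashlight DFS of $T_D$ from the paragraph preceding the statement, but instead of recomputing $D[\tau]$ from scratch at each node (which is what gave the $O(n\|D\|)$ bound), I will maintain $D[\tau]$ \emph{incrementally} as I move between parent and child, and amortize the update cost across the root-to-leaf path.

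First I set up the data structure. For each variable $x_i$, I store two doubly linked lists: the occurrences of $x_i$ and of $\neg x_i$ across the terms of $D$. Each term is itself a doubly linked list of literals, and each literal occurrence carries cross-pointers to its positions in both the variable list and the term list. I also keep a counter $c$ storing the number of currently ``alive'' terms of $D[\tau]$. When I descend from $\tau$ (at depth $k$) to the child $\tau' = \tau \cup \{x_{k+1}\mapsto b\}$, I walk the two occurrence lists of $x_{k+1}$: for each occurrence of the literal falsified by $b$, I remove the whole containing term from the alive set (updating $c$) using the cross-pointers; for each occurrence of the literal satisfied by $b$, I splice that literal out of its term. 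Every removal is $O(1)$ on a doubly linked list, and I push an entry on a stack so that on backtrack I can undo the changes in reverse order, also in $O(1)$ per entry. Checking satisfiability of $D[\tau']$ is then just the test $c > 0$, done in constant time; a DNF is satisfiable iff it contains at least one (consistent) term.

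Since the extension problem for a DNF never has false positives that lead to dead ends (if $D[\tau]$ is non-empty then so is at least one $D[\tau']$ for one of the two children, namely the one chosen by any term surviving in $D[\tau]$), and since we test the other child's emptiness in $O(1)$, the DFS never spends more work at an internal node than the cost of the two updates plus the two undos on $x_{k+1}$. Summing over the levels $1,\dots,n$ of a single root-to-leaf descent, the total amount of ``literal work'' is at most $2\|D\|$: each occurrence of a literal in $D$ is touched at most once per descent (by the occurrence list of its variable) and at most once per ascent. Between two consecutive models $\alpha_i$ and $\alpha_{i+1}$, the DFS ascends from $\alpha_i$ to some common ancestor and then descends to $\alpha_{i+1}$; this is a prefix of an ascent followed by a prefix of a descent, so the work is bounded by the $O(\|D\|)$ bound above, plus the $O(n) = O(\|D\|)$ cost of emitting $\alpha_{i+1}$.

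The main obstacle is purely data-structural: making sure that the ``remove a term'', ``remove a literal from a term'', and their undos are each truly $O(1)$, and that the occurrence lists are maintained coherently so that the total literal work along a root-to-leaf path is really $O(\|D\|)$ and not $O(n\|D\|)$. This is handled by the cross-pointered doubly linked lists described above, together with an undo stack that records each pointer manipulation. With these data structures in place, the delay between two consecutive leaves of $T_D$ is $O(\|D\|)$, which proves the proposition.
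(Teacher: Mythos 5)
Your proposal is correct and is essentially the paper's own proof: both maintain $D[\tau]$ incrementally via per-literal occurrence lists and a global counter $c$ of non-falsified terms (so the satisfiability test is $c>0$), undo the changes symmetrically when backtracking, and amortize by noting that each literal occurrence is touched $O(1)$ times along a root-to-leaf path, giving $O(\|D\|)$ per ascent-plus-descent between consecutive models. The only cosmetic difference is that the paper guards term removal with a per-term counter $f_T$ of falsified literals instead of your explicit list splicing (a guard you implicitly need anyway so that a term already killed by an earlier literal is not removed twice).
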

\begin{proof}
  We use the previous algorithm which does a depth first search in $T_D$. When it visits 
 the node  labeled by $\tau$, we need to decide whether $D[\tau]$ is satisfiable which is equivalent to testing whether it has a non falsified term.  To speed-up the flashlight search, we use a data structure to decide quickly this problem and we need to guarantee that it can be updated fast enough when the tree is traversed.
 
 For each term $T$, we store an integer $f_T$ which represents how many literals of the term are falsified by the current partial assignment. A term $T$ is valid when $f_T = 0$. For each literal $l$, we store the list of terms which contain the literal. 
 We also store an integer $c$ for valid terms, which counts how many terms are not falsified by the current partial assignment.
 
 At the beginning of the algorithm, all $f_T$ are set to $0$ and $c = m$. Visiting a child $\tau'$ of $\tau$ corresponds to setting the value of some variable $x_k$. If $x_k$ is set to $0$ then for each term $T$ containing $x_k$, $f_T$ is incremented. The number of terms such that $f_T$ is changed from $0$ to $1$ is subtracted to $c$. The case where $x_k$ is set to $1$
 is analogous (consider that $\bar{x_k}$ is set to $0$). The fact that $D[\tau']$ is satisfiable is equivalent to $c > 0$. Remark that going up the tree when backtracking works exactly as going down, but the variables $f_T$ are decremented and $c$ is incremented instead.
 
 As a consequence, the complexity of the algorithm over a path in the tree is $O(\|D\|)$ since for each term $T$, the variable $f_T$ will be modified at most $|T|$ times. When the algorithm goes down the tree it may first set $x_k$ to $0$ and fail, but then it sets $x_k$ to $1$ and goes down. Hence the cost of going down to a leaf is at most twice the cost of following the path to the leaf. Since, between two output solutions, the algorithm follows one path up and one down, the delay is in $O(\|D\|)$.
 \end{proof}

 This last algorithm has a delay linear in the size of the formula, however the
 size of the formula can be much larger than the size of a model, if the number
 of terms in the DNF is super-polynomial in the number of variables for example.
 In the next sections we will try to reduce or eliminate the dependency of the
 delay in the size of the input either for particular DNF formulas or by
 relaxing the notion of delay.

\section{Enumerating models of $k$-DNF}
\label{sec:kdnf}

A term $T$ is a $k$-term if and only if $|T| \leq k$. A DNF is a $k$-DNF if all its terms are $k$-terms. In this section, we present an algorithm to enumerate the models of a $k$-DNF with a $2^{O(k)}$ delay. The idea is to select a $k$-term and use its $2^{n-k}$ models to amortize more costly operations. More precisely, we prove the following:

\begin{theorem}\label{thm:kdnf}
  The models of a $k$-DNF with $n$ variables can be enumerated with precomputation in $O(n)$ and $O(k^{3/2}2^{2k})$ delay.
\end{theorem}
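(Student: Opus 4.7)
The plan is to design a recursive enumeration algorithm, following the hint right before the theorem: at each call we single out a term $T$ of the current DNF and use its satisfying assignments as a large block to amortize more expensive operations. Given a partial assignment $\sigma$ (initially empty) whose extensions to a model of $D$ we want to list, I pick a \emph{live} term $T$ of $D[\sigma]$ and denote by $\ell_1, \dots, \ell_{k''}$ its literals whose variable is still free under $\sigma$ (so $k'' \leq k$). I split the extensions of $\sigma$ into $k''+1$ pairwise disjoint classes. Class~$0$ consists of the $2^{n-|\sigma|-k''}$ extensions that also satisfy $T$; these are generated by Proposition~\ref{prop:Gray} in constant delay. For $i = 1, \dots, k''$, class~$i$ consists of the extensions agreeing with $\ell_1, \dots, \ell_{i-1}$ and contradicting $\ell_i$, and is enumerated by recursing on $\sigma_i = \sigma \cup \{\ell_1 \mapsto 1, \dots, \ell_{i-1} \mapsto 1, \ell_i \mapsto 0\}$, after a satisfiability check on $D[\sigma_i]$ to skip empty branches. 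Since the $k''+1$ classes partition all extensions of $\sigma$, every model of $D$ compatible with $\sigma$ is produced exactly once.

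The cost analysis rests on a simple amortization. Every non-skipped recursive call opens a Gray code phase that outputs at least one model, so the total number of calls is bounded by the number of models of $D$. If the non-Gray-code work at each call is $O(k^{3/2}2^{2k})$, the amortized delay is then of the same order. To turn this amortized bound into a worst-case delay, I would use the standard look-ahead technique: before emitting the last output of the current Gray code phase, I prepare the next branch (satisfiability check, data-structure update, pick of a new live term), spreading its setup cost over the outputs of the current phase. Since every non-skipped phase outputs at least one model, this guarantees a delay of order $O(k^{3/2}2^{2k})$ between any two consecutive outputs.

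The main obstacle is to maintain, under a changing $\sigma$, a data structure that supports both the satisfiability test on $D[\sigma]$ and the retrieval of one live term in time $O(k^{3/2}2^{2k})$. The flashlight counter of Proposition~\ref{prop:classical_flashlight} answers both queries but its update cost when a literal is fixed or released is $\Omega(m)$, far too expensive for our delay target. To avoid this I plan to store $D$ in a trie over ordered literals (Section~\ref{par:trie}) and to argue that, because $D$ is a $k$-DNF, at each branching only the trie nodes lying at depth at most $k$ around the currently chosen term $T$ can see their status change, and there are at most $2^{O(k)}$ of them. Counting these nodes together with the auxiliary $k''$-indexed array needed to plug into the Gray code subroutine of Proposition~\ref{prop:Gray} should then yield the $O(k^{3/2}2^{2k})$ bound per branching step; making this localisation precise while still tracking a global liveness counter is where I expect the bulk of the technical work to go.
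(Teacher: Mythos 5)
Your decomposition is exactly the paper's Lemma~\ref{lem:partition} (class~$0$ is $\sat(D,\one_T)$ and class~$i$ is $\sat(D,\zero_T^{y_i})$), and the use of Proposition~\ref{prop:Gray} on the chosen term matches the paper. The gap is in how you pay for computing the subformulas $D[\sigma_i]$. Your plan rests on the claim that the non-Gray-code work per call is $O(k^{3/2}2^{2k})$, to be established by arguing that only $2^{O(k)}$ trie nodes ``around $T$'' change status at a branching. That localisation claim is false: to build $D[\sigma_i]$ you must touch every term whose variable set meets $\{\var(\ell_1),\dots,\var(\ell_i)\}$ --- e.g.\ every term containing $\var(\ell_i)$ must be either deleted or shortened --- and there can be $\Omega(M)$ such terms, scattered across the whole trie, with $M$ as large as $2^k\binom{n}{k}=n^{\Theta(k)}$. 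So the per-call setup cost is genuinely $\Theta(k^2M)$, not $2^{O(k)}$, and no data structure confined to a neighbourhood of $T$ can avoid this. Consequently your amortization (``number of calls $\leq$ number of models, hence amortized delay $=$ per-call work'') only yields $O(k^2M)$ amortized, which is not the claimed bound.

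The missing idea is to spread the $O(k^2M)$ setup cost not over ``at least one model'' but over the \emph{exponentially many} models of class~$0$: choosing $T$ of \emph{minimal} size $k'$ guarantees $|\sat(D,\one_T)|\geq 2^{N-k'}$ outputs in the current Gray code phase, so interleaving $d$ steps of setup per output succeeds provided $2^{N-k'}d\geq k'^2M$. Establishing that $d=k^{3/2}2^{2k}$ always suffices, at every level of the recursion, is the real content of the proof: one must bound the number $M'$ of terms of a DNF on $N'$ variables whose terms all have between $k'$ and $k$ literals, and show $k'^2M'2^{k'-N'}\leq k^{3/2}2^{2k}$. This uses that $2^{-N'}\binom{N'}{k}$ is maximized at $N'=2k$ together with Stirling's estimate for $\binom{2k}{k}$, and is also where the minimality of $T$ (hence the lower bound $k'$ on all term sizes) enters. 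Without this counting argument you can derive neither the worst-case delay nor the specific value $k^{3/2}2^{2k}$.
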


% To explain our algorithm, we need first to introduce notations. Let $H(p) = -p \log(p) - (1-p)\log(1-p)$ be the binary entropy. From the following classical inequality (see~\cite{flum2006parameterized}):
%  $$\displaystyle{\sum_{i = 0}^{ k} {n \choose i}} \leq 2^{nH(k/n)},$$
%  we get for $k \leq n/2$:
% \begin{lemma}
%   \label{lem:nbkterms} The number of $k$-terms on $n$ variables is at most $2^{n H(k/n)+ k}$.
% \end{lemma}
% \begin{proof}
% There are ${n \choose k}$ subsets of $k$ variables. The number of subsets of less than $k$ variables is bounded by $\displaystyle{\sum_{i = 1}^{ k} {n \choose i}}.$  When $k \leq n/2$ , this sum is bounded by 
% For each variable of a subset of size less than $k$, we can choose to use either the positive literal or the negative literal leading to at most $2^k$ possibilities.
% \end{proof}

 To explain our algorithm, we need first to introduce notations. Let $D$ be a DNF-formula on variables $X$. Assume without loss of generality that $X$ is ordered with $<$. Given a term $T \in D$, we denote by $\one_T : \var(T) \rightarrow \{0,1\}$ the only model of $T$, that is, for every $\ell \in T$, $\one_T(x) = 1$ if $\ell = x$ and $\one_T(x) = 0$ if $\ell = \neg x$.

If $y \in \var(T)$, we denote by $\zero^y_T : \{z \in \var(T) \mid z \leq y\} \rightarrow \{0,1\}$ the assignment defined by $\zero^y_T(z) = \one_T(z)$ for $z < y$ and $z \in \var(T)$ and $\zero^y_T(y) = 1-\one_T(y)$.

For example, if $T = x_1 \land x_2 \land \neg x_3$, we have $\one_T = \{x_1\mapsto 1, x_2 \mapsto 1, x_3 \mapsto 0\}$ and $\zero_T^{x_1} = \{x_1 \mapsto 0\}$, $\zero_T^{x_2} = \{x_1 \mapsto 1, x_2 \mapsto 0\}$ and $\zero_T^{x_3} = \{x_1 \mapsto 1, x_2 \mapsto 1, x_3 \mapsto 1\}$.

These assignments naturally induce a partitioning of the model of a DNF:

\begin{lemma}
\label{lem:partition}
  Given a DNF $D$ and a term $T \in D$, we have:
  \[ \sat(D) = \sat(D, \one_T) \uplus \biguplus_{y \in \var(T)} \sat(D,\zero_T^y). \]
\end{lemma}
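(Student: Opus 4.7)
The plan is to prove the two directions of the set equality together with disjointness. Containment of the right-hand side in $\sat(D)$ is immediate from the definition of $\sat(D,\tau)$, so the real content is to show that every model of $D$ lies in exactly one of the listed pieces.

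First I would describe the bijective partitioning idea. Fix a model $\alpha \in \sat(D)$ and look at its restriction $\alpha|_{\var(T)}$ to the variables of the chosen term $T$. There are two possible cases: either $\alpha|_{\var(T)} = \one_T$, in which case $\alpha$ is compatible with $\one_T$ and lies in $\sat(D,\one_T)$; or $\alpha|_{\var(T)} \neq \one_T$. In the second case, let $y$ be the $<$-minimum element of $\var(T)$ on which $\alpha$ and $\one_T$ disagree. For every $z \in \var(T)$ with $z < y$ we have $\alpha(z) = \one_T(z) = \zero_T^y(z)$ by minimality, and $\alpha(y) = 1 - \one_T(y) = \zero_T^y(y)$ by definition. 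Hence $\alpha$ is compatible with $\zero_T^y$ and belongs to $\sat(D,\zero_T^y)$. This shows $\sat(D) \subseteq \sat(D,\one_T) \cup \bigcup_{y \in \var(T)} \sat(D, \zero_T^y)$.

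Next I would check the reverse inclusion and disjointness simultaneously by looking at the forced values on $\var(T)$. Any $\alpha \in \sat(D,\one_T)$ satisfies $\alpha(y) = \one_T(y)$ for every $y \in \var(T)$, whereas any $\alpha \in \sat(D,\zero_T^y)$ satisfies $\alpha(y) = 1 - \one_T(y)$. Hence $\sat(D,\one_T)$ is disjoint from every $\sat(D,\zero_T^y)$. For two distinct $y_1, y_2 \in \var(T)$ with $y_1 < y_2$, an assignment compatible with $\zero_T^{y_2}$ has $\alpha(y_1) = \one_T(y_1)$ (since $y_1 \in \{z \in \var(T) : z \leq y_2\}$ and $y_1 < y_2$), while one compatible with $\zero_T^{y_1}$ has $\alpha(y_1) = 1 - \one_T(y_1)$; hence $\sat(D,\zero_T^{y_1}) \cap \sat(D,\zero_T^{y_2}) = \emptyset$. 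Combined with the trivial containment $\sat(D,\tau) \subseteq \sat(D)$, this finishes the proof.

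The argument is essentially a case analysis on the first variable of $\var(T)$ where an arbitrary model deviates from the canonical model $\one_T$ of $T$, so I expect no real obstacle; the only point that needs care is keeping track of the ordering on $\var(T)$ when certifying both the covering and the pairwise disjointness of the pieces indexed by $y$.
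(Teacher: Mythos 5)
Your proof is correct and follows essentially the same route as the paper's: the covering direction is the same case analysis on the smallest variable of $\var(T)$ where a model deviates from $\one_T$, and disjointness is certified by the same comparison of forced values on $\var(T)$. No issues.
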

\begin{proof}
  The right-to-left inclusion is clear as $\sat(D,\tau) \subseteq \sat(D)$ for
  any $\tau$. Moreover, these unions are clearly disjoint since for every $y,z
  \in \var(T)$, $z < y$, we have $\zero^z_T(z) = 1-\one_T(z)$ and $\zero^y_T(z) =
  \one_T(z)$ that is $\zero^z_T(z) \neq \zero^y_T(z)$. Similarly, $\zero_T^y(y) = 1-\one_T(y)$, that is $\one_T(y) \neq \zero_T^y(y)$ by definition.

  For the left-to-right inclusion, let $\tau \in \sat(D)$. If $\tau \simeq \one_T$, then $\tau \in \sat(D, \one_T)$. Otherwise, let $y$ be the smallest variable of $\var(T)$ such that $\tau(y) \neq \one_T(y)$. Then we have $\tau \simeq \zero_T^y$.
\end{proof}

\begin{proof}[Proof (of Theorem~\ref{thm:kdnf}).]

Given a $k$-DNF $D$ on variables $X$, we use Lemma~\ref{lem:partition} to enumerate $\sat(D)$. We denote by $X = \var(D)$,  $N = |X|$ and  $M = |D|$. 

We start by picking one term $T \in D$. We choose it to have a minimal number of
literals. By definition, it contains at most $k$ literals. Enumerating $\sat(D,
\one_T)$ corresponds to enumerating the models of $T$ and can be done with
$O(1)$ delay by Proposition~\ref{prop:Gray}. Observe that the precomputation
time boils down to choosing a term and outputting its first solution which can
obviously be done in time $O(n)$ as stated in the theorem.

Between the output of two solutions of $\sat(D, \one_T)$, we spend some extra time to precompute $D[\zero_T^y]$ for every $y \in \var(T)$. To store the formula $D$ and the subformulas 
we will need, we use a trie of terms as explained in Section~\ref{par:trie}.
From the trie of $D$ we can compute the trie of $D[\zero_T^y]$ in time $O(kM)$, where $M$
is the number of terms of $D$ by traversing $D$, detecting the occurrences of the literals which should be set to $0$ or $1$. When a path contains a literal set to one, we contract it by removing this node and connecting its parents to its children. When a path contains a literal set to zero, we remove the subtree and all ancestors of degree one since they do not represent a term anymore.
In the end, we need $O(k^2M)$ steps to precompute $D[\zero_T^y]$ for every $y \in \var(T)$. Let $A$ be a constant such that this precomputation can be done in time $A \cdot k^2M$ steps at most.
 
Assume that between the output of two solutions of $\sat(D, \one_T)$, we allow  $dA$ steps for this precomputation (the value of $d$ will be fixed later depending on our needs). Since $|\sat(D, \one_T)| \geq 2^{N-k}$, this gives us a total amount of $2^{N-k}dA$ steps for this precomputation. Thus, if $2^{N-k}dA > Ak^2M$, that is, if
\begin{equation} \label{eq:steps}
  2^{N-k}d \geq k^2M
\end{equation}
we have enough time to compute $D[\zero_T^y]$ for every $y \in \var(T)$. If this is the case, then we do the precomputation, finish the enumeration of $\sat(D,\one_T)$ and then recursively start the enumeration of $\zero_T^y \times D[\zero_T^y]$ for each $y \in \var(T)$. The number of variables of $D[\zero_T^y]$ has of course decreased but we can still allow for $dA$ steps of extra computation between the output of two solutions.

To make sure the algorithm works, we have to ensure that~(\ref{eq:steps}) is
true even during a recursive call where the DNF $D'$ we are using to enumerate
the solutions of $D$ may have less variables and less terms. To ensure this, we
will pick $d$ sufficiently large. We now show that picking $d = k^{3/2} 2^{2k}$
is sufficient.

Let $D'$ be a DNF that is used at some point in a recursive call to enumerate
the solutions of $D$. We denote by $X' = \var(D')$, $N' = |X'|$ and $M' = |D'|$.
We follow the previous strategy. We pick a term $T'$ of $D'$ of smallest
size. We denote by $k' \leq k$ the number of literals it contains. By
definition, every terms in $D'$ have at least $k'$ and at most $k$ literals.

As long as
\begin{equation} \label{eq:stepsp}
  2^{N'-k'}d \geq k'^2M'
\end{equation}
is true, the strategy described above will work and ensure that we can
enumerate the solutions of $D'$ with delay $dA$ while preparing the inputs of
the next recursive calls. We thus have to check that we always have:

\begin{equation} \label{eq:border}
  d = k^{3/2} 2^{2k} \geq k'^2M' \times 2^{k'-N'}
\end{equation}

Let $M(k',k)$ be the maximal number of terms on $N'$ variables containing at
least $k'$ and at most $k$ literals. By definition, $M' \leq M(k,k')$. We start
by proving that by induction that $2^{k'}M(k',k) \leq 2^kM(k,k)$. If $k=k'$ then
the equality is clear. Otherwise, if $k' < k$:

\begin{align*}
  2^{k'}M(k',k) & = 2^{k'}M(k',k')+{1 \over 2}2^{k'+1}M(k'+1,k) &\text{by definition of $M(k',k)$}\\
                & \leq 2^{k'}M(k',k')+2^{k-1}M(k,k) &\text{by induction} \\
                & \leq 2^{k}M(k,k) & \text{since $M(k',k') < M(k,k)$.}
\end{align*}

There are exactly $2^k{N' \choose k}$ terms having exactly $k$ literals on
$N'$ variables. It follows that

\begin{align*}
  k'^2 \times 2^{k'}M' \times 2^{-N'} & \leq k^22^{2k-N'}{N' \choose k}  & \text{from what precedes,} \\
                                      & \leq k^22^{2k-2k}{2k \choose k} & \text{since $2^{-N'}{N' \choose k}$ is maximal for $2k=N'$,} \\
                                      & \leq k^2{(2k)!\over (k!)^2} & \text{by definition,} \\
                                      & \leq k^2{\sqrt{4k\pi} ({2k \over e})^{2k} \over {2k\pi ({k \over e})^{2k}}} & \text{by Stirling's formula, } n! = \sqrt{2n\pi}\big({n \over e}\big)^n+o(1) \\
           & \leq k^2{2^{2k} \over \sqrt{\pi k}} & \text{after simplifying,} \\
           & \leq k^{3/2}2^{2k} = d.
\end{align*}
which is the desired inequality.
\end{proof}

\begin{algorithm}
  
 %  \dontprintsemicolon
  \KwData{A $k$-DNF-formula $D$}
  \Begin{
    \If{$D = \emptyset$}{\Return $\emptyset$ \;}
    \eIf{$D = \{C\}$}
    {Enumerates the models of $C$ \;}
    {
      $d \leftarrow Ak^{3/2}2^{2k}$ \;
      Pick $C \in D$ of minimal size\;
      Every $d$ steps of computation in the next block, output a new model of $D[\one_C]$ \;
      \Begin{
      \For{$y \in \var(C)$}{
        $D^y \leftarrow D[\zero_C^y]$ \;
      }
    }
    \For{$y \in \var(C)$}{
      Recursively enumerates $\zero_C^y \times \sat(D^y)$\;
    }
    }
    }
  \caption{Enumerates the models of $k$-DNF with delay $2^{O(k)}$.\label{alg:kdnfenum}}
\end{algorithm}

  A pseudo code for the algorithm of Theorem~\ref{thm:kdnf} is given in Figure~\ref{alg:kdnfenum}. It has \emph{constant delay} for constant $k$ and it is in {\strongpdelay} (polynomial in $n$ the size of a solution) for terms of size $O(\log(n))$.
A natural question is then to further improve the delay for $k$-DNF. An algorithm with polynomial delay in $k$ would be a {\strongpdelay} algorithm for the general case. Already a subexponential delay in $k$ would be an improvement for the general case, at least for formulas with many terms.

We can use Algorithm~\ref{alg:kdnfenum} to significantly improve a result on the enumeration of models of first order formula with free second order variables. In~\cite{durand2011enumeration}, the class of 
$\Sigma_1$ formulas is defined as the set of first order formulas with with free second order variables, that can be written as a single block of existential quantifiers followed by a quantifier free formula (quantifier depth one). In~\cite{durand2011enumeration} Theorem~$10$, it is proved that the models of a $\Sigma_1$ formula can be enumerated in polynomial delay using a method similar to Proposition~\ref{prop:enum_union}. Moreover, this problem is shown to be equivalent to the enumeration of models of a $k$-DNF. As a corollary of Theorem~\ref{thm:kdnf}, we obtain the following theorem.
\begin{theorem}
 The models of a $\Sigma_1$ formula can be enumerated in constant delay.
\end{theorem}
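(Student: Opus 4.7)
The plan is to combine the reduction from $\Sigma_1$ enumeration to $k$-DNF enumeration established in~\cite{durand2011enumeration} with the constant-delay algorithm of Theorem~\ref{thm:kdnf}. The crucial observation is that, in this reduction, the parameter $k$ of the produced DNF depends only on the fixed $\Sigma_1$ formula --- essentially on the number of atoms in its quantifier-free part involving a free second-order variable --- and not on the size of the input structure. Once $k$ is a constant, the $2^{O(k)}$ delay given by Theorem~\ref{thm:kdnf} degenerates to $O(1)$, which yields the desired constant delay.

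More precisely, given a $\Sigma_1$ formula $\Phi = \exists \vec{y}\, \psi(\vec{y}, \vec{X})$ with free second-order variables $\vec{X}$ and a structure $\mathcal{A}$, the construction of~\cite{durand2011enumeration} produces a DNF $D$ whose Boolean variables encode the tuples of the relations interpreting $\vec{X}$ and whose terms correspond to choices of witnesses for $\vec{y}$ in $\mathcal{A}$: each term records the literals over $\vec{X}$ that must hold in order for $\psi$ to be satisfied under such a choice. Since the number of atoms of $\psi$ mentioning $\vec{X}$ is bounded by a constant depending only on $\Phi$, every term has at most that many literals, so $D$ is a $k$-DNF for a fixed $k$. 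Moreover, $D$ can be built in polynomial time in $|\mathcal{A}|$ by iterating over all tuples of elements of $\mathcal{A}$ that can serve as witnesses, which fits entirely inside the polynomial precomputation phase allowed by our model of enumeration complexity (in particular, the $O(n)$ precomputation of Theorem~\ref{thm:kdnf} is absorbed into this step).

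It then suffices to feed $D$ into Algorithm~\ref{alg:kdnfenum}: by Theorem~\ref{thm:kdnf} it enumerates $\sat(D)$ with delay $O(k^{3/2}2^{2k}) = O(1)$, and each output encodes, via the variable--tuple correspondence set up during the precomputation, one assignment of $\vec{X}$ satisfying $\Phi$ in $\mathcal{A}$. Recovering the $\Sigma_1$ model from the Boolean model is free in the RAM model, since consecutive outputs differ by a constant number of bit flips, exactly as in Proposition~\ref{prop:Gray}. The main point of care --- and the only non-bookkeeping step --- is to verify that the translation really yields a $k$-DNF with $k$ bounded by a quantity depending solely on $\Phi$, rather than merely a DNF with terms of polynomial width; but this is precisely the content of the proof of Theorem~$10$ in~\cite{durand2011enumeration}, so our contribution here is simply to plug in our stronger $k$-DNF enumeration algorithm in place of the polynomial-delay one used there.
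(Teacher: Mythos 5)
Your proposal is correct and follows exactly the route the paper takes: the paper derives this theorem as an immediate corollary of Theorem~\ref{thm:kdnf} via the equivalence between $\Sigma_1$ enumeration and $k$-DNF enumeration established in Theorem~10 of~\cite{durand2011enumeration}, with $k$ depending only on the fixed formula. Your write-up merely spells out the reduction in more detail than the paper does, which is fine.
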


The complexity considered in the previous theorem is the \emph{data complexity}, that is the size of the formula is considered fixed and the complexity depends only on the size of the model/database the formula is evaluated on. However, the dependency in the size of the formula is only singly exponential, which is low in this context. The result is surprising since we obtain the same complexity as for $\Sigma_0$ (quantifier free formulas), the first level of the hierarchy which is far simpler and for which we use a different algorithm. However, there is one difference between $\Sigma_0$ and $\Sigma_1$: the first can be solved in constant time in a more restricted 
machine model, where the set of output registers is always the same.

\section{Average delay of enumerating the models of DNF}\label{sec:average}

    \newcommand{\nox}[1]{\underline{#1}}

In this section, we analyze the average delay of the Flashlight method, using appropriate data structures, to show it is better than the delay. The idea is to amortize the cost of maintaining the formula $D[\tau]$ during the  traversal of the tree over all models of $D[\tau]$ in the spirit of~\cite{uno2015constant}.
 To do that we exhibit a relation between the number of models of a DNF and its number of terms.
 For any class of DNF for which it is possible to guarantee, for all partial assignments $\tau$, that the number of models of $D[\tau]$ is large enough with regard to $|D[\tau]|$, we obtain a good average delay.
We then improve the delay of the Flashlight method by improving the computation of the subformulas when branching and showing that either the number of terms decreases fast which makes the depth small or the cost of branching is small.

  From now on, let $\gamma = \log_3(2)$. 
  If we consider the DNF with all possible terms on $n$ variables, it has $2^n$ models and $3^n$ terms.
  Hence, there is a DNF formula with $m$ non empty distinct terms and $m^{\gamma}$ models
  and we are able to prove a matching lower bound on the number of terms.
  
  \begin{lemma}\label{lem:number_general}
   A DNF formula with $m$ non empty distinct terms has at least $m^{\gamma}$ models.
  \end{lemma}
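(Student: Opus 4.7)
The plan is to reduce the lemma to a cleaner counting statement. For a subset $A \subseteq \{0,1\}^n$, let $\tilde N(A)$ denote the number of terms $T$ over $\{x_1,\dots,x_n\}$, allowing the empty term, such that $\sat(T) \subseteq A$. I will show $\tilde N(A) \leq |A|^{1/\gamma}$, where $1/\gamma = \log_2 3$. The lemma follows at once: if $D$ is a DNF with $m$ non-empty distinct terms, every $T \in D$ satisfies $\sat(T) \subseteq \sat(D)$, so $m \leq \tilde N(\sat(D)) \leq |\sat(D)|^{1/\gamma}$, and raising both sides to the power $\gamma$ yields $|\sat(D)| \geq m^\gamma$.

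To prove $\tilde N(A) \leq |A|^{\beta}$ with $\beta := 1/\gamma$, I would induct on $n$. Partitioning $A$ by the value of $x_1$ into $A_0, A_1 \subseteq \{0,1\}^{n-1}$ and classifying each term according to whether it contains $x_1$, contains $\neg x_1$, or omits $x_1$ altogether, a direct bijection gives the symmetric recursion
\[
\tilde N(A) = \tilde N(A_0) + \tilde N(A_1) + \tilde N(A_0 \cap A_1).
\]
Allowing the empty term in the definition of $\tilde N$ is the key choice here: it absorbs the boundary cases (terms of the form $\{x_1\}$, $\{\neg x_1\}$, or the empty term itself) and makes the recursion symmetric, with no indicator corrections.

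Applying the induction hypothesis and setting $x = |A_0|$, $y = |A_1|$, $z = |A_0 \cap A_1| \leq \min(x,y)$, the inductive step reduces to the analytic inequality $x^\beta + y^\beta + z^\beta \leq (x+y)^\beta$. Bounding $z^\beta \leq \min(x,y)^\beta$ and assuming $x \leq y$, it suffices to show $2x^\beta + y^\beta \leq (x+y)^\beta$; dividing by $y^\beta$ and letting $t = x/y \in [0,1]$ gives the one-variable form $h(t) := (1+t)^\beta - 2t^\beta - 1 \geq 0$. The identity $2^\beta = 3$ produces $h(0) = h(1) = 0$ (matching precisely the tight cases, namely trivial splits and full subcubes such as $A = \{0,1\}^n$), and a short computation of $h''(t) = \beta(\beta-1)\bigl[(1+t)^{\beta-2} - 2t^{\beta-2}\bigr]$ shows it is negative on $(0,1]$ because $\beta - 2 < 0$ forces $(1+t)^{\beta-2} < t^{\beta-2}$; concavity plus the vanishing at both endpoints then forces $h \geq 0$ on $[0,1]$. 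This analytic inequality is the main obstacle and the only step that depends essentially on the specific value of $\gamma = \log_3 2$; the combinatorial recursion itself is transparent once one thinks in terms of $\tilde N$ rather than counting non-empty terms directly.
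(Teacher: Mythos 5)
Your proof is correct, and it takes a genuinely different route from the paper's. The paper inducts on the number of variables directly on the DNF $D$: it splits the terms into those containing $x$, those containing $\bar{x}$, and those omitting $x$, lower-bounds the model count by carefully separating out the models shared between the first two groups (the quantity $s_r$), and closes the induction with Minkowski's inequality for the exponent $1/\gamma>1$ followed by concavity of $s\mapsto s^\gamma$ and the identity $3^\gamma=2$. You instead dualize: you upper-bound, for an arbitrary $A\subseteq\{0,1\}^n$, the number of terms (subcubes) contained in $A$ by $|A|^{\log_2 3}$, and the same trichotomy of terms then yields the \emph{exact}, symmetric recursion $\tilde N(A)=\tilde N(A_0)+\tilde N(A_1)+\tilde N(A_0\cap A_1)$ --- the overlap the paper must track by hand is absorbed into the single term $\tilde N(A_0\cap A_1)$, and including the empty term is indeed what makes the three cases match up cleanly. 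The analytic core then collapses to the one-variable inequality $(1+t)^\beta\ge 2t^\beta+1$ on $[0,1]$, which your concavity-plus-endpoints argument correctly establishes (the endpoints are precisely the tight cases, via $2^\beta=3$), whereas the paper needs Minkowski plus concavity. Two cosmetic omissions: state the base case $n=0$ (where $\tilde N(\emptyset)=0$ and $\tilde N(\{()\})=1$), and note that the division by $y^\beta$ presumes $y>0$, the case $y=0$ being trivial. What your approach buys is that the only slack in the whole induction is $|A_0\cap A_1|\le\min(|A_0|,|A_1|)$, making it easy to see where the bound is tight (full subcubes); what the paper's approach buys is that it stays entirely within the language of DNFs and the restriction operation $D_x,D_{\bar x},D_{\underline{x}}$ reused elsewhere in Section~\ref{sec:average}.
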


  \begin{proof}

   The proof of the lemma is by induction on $n$ the number of variables in the DNF. 
   For $n=1$, the formula may have zero term and zero solution, one term and one solution or two terms
   and two solutions. All these cases satisfy the property. 
   Assume the lemma proved for $n$ variables and consider $D$ a DNF with $n+1$ variables.
   Let $x$ be some variable of $D$, we want to project out $x$ to apply the induction hypothesis.
   Let us define a partition of $D$ into three formulas over $\var(D)\setminus\{x\}$: 
   \begin{itemize}
    \item $D_x$ is the set of terms of $D$ containing the literal $x$, with $x$ removed; 
    \item $D_{\bar{x}}$ is the set of terms of $D$ containing the literal $\bar{x}$, with $\bar{x}$ removed; 
    \item $D_{\nox{x}}$ is the set of terms of $D$ containing neither $x$ or $\bar{x}$.
   \end{itemize}

   We let $s$ be the number of models of $D$, $s_x$ the number of models of
   $D_x$, $s_{\bar{x}}$ the number of models of $D_{\bar{x}}$ and $s_{\nox{x}}$
   the number of $D_{\nox{x}}$. We relate $s$ to $s_x$, $s_{\bar{x}}$ and
   $s_{\nox{x}}$ the number of models of $D_x$, $D_{\bar{x}}$ and $D_{\nox{x}}$.
   All models of $D_x$ extended by $x=1$ are models of $D$ as are the models of
   $D_{\bar{x}}$ extended by $x=0$. Moreover these models are distinct. Now
   consider the models of $D_{\nox{x}}$, they can be extended by $x=0$ and $x=1$
   to give a model of $D$. We decompose $s_{\nox{x}}$: $s'_{\nox{x}}$ is the
   number of models of $D_{\nox{x}}$ which are not model of both $D_x$ and
   $D_{\bar{x}}$ while $s_r$ is the number of models which satisfy $D_x$,
   $D_{\bar{x}}$ and $D_{\nox{x}}$. We write $s_x = s'_x + s_r$ and $s_{\bar{x}}
   = s'_{\bar{x}} + s_r$. Remark that a solution satisfying $D_x$, $D_{\bar{x}}$
   and $D_{\nox{x}}$ yields two solutions of $D$. Hence we have the following
   equation:
   \begin{equation}\label{eq:decomposition}
   s \geq s'_x + s'_{\bar{x}} + s'_{\nox{x}} + 2s_r.
   \end{equation}

   It is not an equality, since the models counted by $s'_{\nox{x}}$ may contribute one or two models 
   to $D$. 
 
    By induction hypothesis, we have for the three formulas on $n$ variables: 
   \begin{itemize}
    \item $|D_x|^{\gamma} \leq s'_x + s_r$
    \item $|D_{\bar{x}}|^{\gamma} \leq s'_{\bar{x}} + s_r$
    \item $|D_{\nox{x}}|^{\gamma} \leq s'_{\nox{x}} + s_r$
    \end{itemize}
    
    By construction, the number of terms of $D$ is equal to $|D_x| + |D_{\bar{x}}| +
    |D_{\nox{x}}|$. Hence we have $|D|^{\gamma} = (|D_x| + |D_{\bar{x}}| + |D_{\nox{x}}|)^{\gamma}$.
    We use the inequalities from the induction hypothesis: $|D|^{\gamma} \leq ((s'_x + s_r)^{1/\gamma} +
    (s'_{\bar{x}} + s_r)^{1/\gamma} + (s'_{\nox{x}} + s_r)^{1/\gamma})^{\gamma}$.
    Since $1/\gamma > 1$, we use Minkowski inequality to obtain 
    $$|D|^{\gamma} \leq ((s'_{x})^{1/\gamma}  +
    (s'_{\bar{x}})^{1/\gamma} + (s'_{\nox{x}})^{1/\gamma})^{\gamma} + (3(s_r)^{1/\gamma})^{\gamma}.$$
    Since $\gamma < 1$, the function $x\rightarrow x^\gamma$ is concave and we have the following inequality:
    $$((s'_{x})^{1/\gamma}  + (s'_{\bar{x}})^{1/\gamma} + (s'_{\nox{x}})^{1/\gamma})^{\gamma} \leq s'_x + s'_{\bar{x}} + s'_{\nox{x}}.$$
    Finally, we use the equality $3^\gamma =2$ and Eq~\ref{eq:decomposition} to obtain 
    $$|D|^{\gamma} \leq s'_x + s'_{\bar{x}} + s'_{\nox{x}} + 2s_r \leq s$$
    which establishes the property for $D$.
  \end{proof}

  \paragraph{Amortized analysis of flashlight algorithms} In
  theorems~\ref{th:average_delay} and \ref{thm:best_average_delay}, we analyze
  the average delay of flashlight algorithms using amortization. Both algorithms
  boil down to exploring the tree $T_D$ defined in
  Section~\ref{prop:classical_flashlight}. For each node $\tau$ of $T_D$, we
  will start by evaluating the computation time $c(\tau)$ we need in the
  algorithm each time we visit this node. We will refer to this quantity as the
  \emph{cost} of $\tau$. Then we divide (not necessarily uniformly) this cost
  among all models of $D$ that are under $\tau$ in $T_D$. The fraction of the
  cost of $c(\tau)$ assigned by $\tau$ on models $\sigma$ is called the
  \emph{charge} of $\tau$ on $\sigma$. The important invariant that we need to
  maintain is that the sum of the charges of $\tau$ on $\sigma$ over all
  models $\sigma$ below $\tau$ in $T_D$ has to be greater than the cost of
  $\tau$. The \emph{charge} of a model $\sigma$ of $D$ is the sum over every
  ancestor $\tau$ of $\sigma$ in $T_D$ of the charge of $\tau$ on $\sigma$. By
  definition, the sum of the charges of all models of $D$ is greater than the
  overall complexity of the algorithm. Thus, an upper bound on the maximal
  charge a model of $D$ can have is also an upper bound on the average delay
  of the algorithm.

  To improve the average delay of the flashlight method, we need to use an adapted data structure to represent the formula. In particular we need that, when considering some inner node $D[\tau]$ of $T_D$, the cost to process it depends on $|D[\tau]|$ instead on $|D|$. In particular, we need to guarantee that there are no redundancy of terms in the structure representing $D[\tau]$ and that we can maintain it efficiently, that is why we again use a trie of terms to represent $D$.
  
  \begin{theorem}\label{th:average_delay}
    The models of a DNF can be enumerated with average delay $O(n^2m^{1-\gamma})$ and polynomial space.
  \end{theorem}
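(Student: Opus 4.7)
The plan is to run the flashlight backtrack of Proposition~\ref{prop:classical_flashlight} on $T_D$ while maintaining an explicit representation of the subformula $D[\tau]$ at the current node, so that the work done at $\tau$ depends on $|D[\tau]|$ rather than on the full $\|D\|$. The crucial tool is then Lemma~\ref{lem:number_general}, which lets us amortize the cost of visiting $\tau$ over the many models of $D$ lying below $\tau$ in $T_D$.

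Concretely, along the current DFS path I maintain a trie of terms representing $D[\tau]$. When I branch on the next variable $x_{k+1}$ with value $b \in \{0,1\}$, I compute the trie of $D[\tau \cup \{x_{k+1} \mapsto b\}]$ by a traversal of the current trie: each term either becomes falsified (the corresponding path is deleted), becomes shorter (the literal on $x_{k+1}$ is removed and any resulting degree-one parents are contracted), or is unaffected. The cost of this update is $c(\tau) = O(n \cdot |D[\tau]|)$, and because the trie represents a \emph{set} of terms, no duplicate ever appears in $D[\tau]$. To stay within polynomial space, I do the update in place, pushing the modifications performed while descending onto a per-level log so the trie can be restored on the way up; each of the $n$ depths of the DFS stack therefore stores only polynomial data.

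For the amortization, I follow the scheme described just before the theorem: assign to each visited node $\tau \in T_D$ the cost $c(\tau)$ and spread it uniformly over the models of $D$ compatible with $\tau$, i.e.\ over $\sat(D[\tau])$. If the trie of $D[\tau]$ contains the empty term, every extension of $\tau$ is a model of $D$ and I enumerate them in constant delay by Proposition~\ref{prop:Gray}; otherwise $D[\tau]$ consists of non-empty distinct terms and Lemma~\ref{lem:number_general} gives at least $|D[\tau]|^{\gamma}$ models beneath $\tau$. The charge received by each such model from $\tau$ is then at most $O(n \cdot |D[\tau]|^{1-\gamma}) \leq O(n \cdot m^{1-\gamma})$. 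Because $T_D$ has depth $n$, each model accumulates charges from at most $n$ ancestors, so its total charge is $O(n^2 m^{1-\gamma})$. Summing charges over all models equals the total running time of the enumeration, hence the average delay is bounded by $O(n^2 m^{1-\gamma})$.

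The main obstacle is data-structural bookkeeping: I must argue cleanly that updating the trie at a node really costs $O(n\cdot |D[\tau]|)$ (the trie's set-semantics ensures no hidden cost from duplicates, and branching downwards and backtracking upwards cost the same order of magnitude), and that the in-place updates together with their logs leave the overall memory polynomial across all $n$ levels of the recursion. Once these engineering points are in place, the bound follows directly from the amortization inequality above.
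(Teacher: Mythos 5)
Your proposal is correct and follows essentially the same route as the paper's proof: a flashlight search over $T_D$ maintaining $D[\tau]$ in a trie of terms with update cost $O(n\cdot|D[\tau]|)$ per node, amortized uniformly over the at least $|D[\tau]|^{\gamma}$ models beneath $\tau$ guaranteed by Lemma~\ref{lem:number_general}, then summed over the $n$ ancestors of each model. Your explicit handling of the case where $D[\tau]$ contains the empty term (so that the lemma's hypothesis of non-empty terms applies) is a small point the paper leaves implicit, but it does not change the argument.
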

\begin{proof}

  We maintain the formula $D[\tau]$ when we traverse $T_D$ using the trie
  containing its terms as explained in Section~\ref{par:trie}. On a node $\tau$
  we can decide quickly whether $D[\tau]$ has a model and we can maintain
  $D[\tau]$ efficiently without redundancy of terms. In the flashlight search,
  we will fix the variables following their order $x_1,\dots,x_k$. Hence,
  visiting a child $\tau'$ of $\tau$ corresponds to setting the value of some
  variable $x_k$ with all variables $x_i$ with $i < k$ already fixed. If we set
  $x_k$ to $0$ then we need to remove the subtree under the root of the trie,
  with first node $x_k$. Then we remove the subtree under the root of the trie,
  with first node $\bar{x}_k$, and insert back all elements in this subtree into
  the trie without the first node $\bar{x}_k$. The complexity of the latter is
  $O(n \cdot |D[\tau]|)$ since the number of terms is bounded by $|D[\tau]|$ (no
  term appears several times in the trie). To set $x_k$ to $1$, we do the same
  operation where we exchange the roles of $x_k$ and $\bar{x}_k$. To be able to
  go up in the tree during the flashlight search, we must restore the trie to
  its previous state. To do that in time $O(n \cdot |D[\tau]|)$, it is enough to
  store the list of elements which have been removed or added in the trie when
  going down the same edge and to reverse the operations. During the
  enumeration, the node $\tau$ will be visited at most four times: twice going
  down, setting first $x_k$ to $0$ and then to $1$, and twice when going up in
  the tree. Thus, the cost $c(\tau)$ of node $\tau$ is $O(n \cdot |D[\tau]|)$.
  The additional memory used during the algorithm is bounded by $O(mn^2)$.

  We now compute the average delay by charging the models under $\tau$ in $T_D$
  as explained before the statement of this theorem. We distribute the cost of
  $\tau$ uniformly on every solution $\sigma$ under $\tau$, that is, if there
  are $S$ models under $\tau$ in $T_D$, the charge of $\tau$ on each of these
  models $\sigma$ is $c(\tau) \over S$. By Lemma~\ref{lem:number_general}, we
  have that $D[\tau]$ has at least $|D[\tau]|^{\gamma}$ models thus the charge
  of $\tau$ on a model $\sigma$ is at most $n |D[\tau]|/|D[\tau]|^\gamma \leq
  nm^{1-\gamma}$. The charge of an ancestor of a model is thus at most
  $nm^{1-\gamma}$. Since it has at most $n$ ancestors, the average delay is
  bounded by $O(n^2m^{1-\gamma})$.
\end{proof}
% 
% 
% \begin{lemma}\label{lemma:balanced}
%  Let $D$ be a DNF formula, then $\max(|D[x \rightarrow 0]|,|D[x \rightarrow 1]|) \geq |D|/3$. 
% \end{lemma}
% 
% \begin{proof}
% The formula $D[x \rightarrow 0]$ is obtained as the union of the terms of $D$ with the literal
% $\bar{X}$, denoted by $D_{\bar{X}}$ and the terms of $D$ without the variable $X$, denoted by $D_{\nox{X}}$.
% Hence, we have $$|D[x \rightarrow 0]| \geq \max(|D_{\bar{X}}|,|D_{\nox{X}}|).$$
% For the same reason, we have $$|D[x \rightarrow 1]| \geq \max(|D_{X}|,|D_{\nox{X}}|).$$
% Combining the two previous inequalities we obtain $$\max(|D[x \rightarrow 0]|,|D[x \rightarrow 1]|) \geq \max(|D_{X}|,|D_{\bar{X}}|,|D_{\nox{X}}|).$$
% Since a max is always larger than the average ot is arguments, 
% $$\max(|D[x \rightarrow 0]|,|D[x \rightarrow 1]|) \geq \left(|D_{X}|+|D_{\bar{X}}|+|D_{\nox{X}}|\right)/3 = |D|.$$
% \end{proof}

We now improve both the previous algorithm and the analysis of its delay. In particular, we better use the fact that the sizes of the considered formulas are decreasing when going down the tree of the flashlight method. It shows that this method is always better on average than the original flashlight search we have presented in Proposition~\ref{prop:classical_flashlight}.

  \begin{theorem}\label{thm:best_average_delay}
    The models of a DNF can be enumerated with average delay $O(nm^{1-\gamma})$ and polynomial space.
  \end{theorem}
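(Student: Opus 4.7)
The plan is to refine both the algorithm and the amortized analysis of Theorem~\ref{th:average_delay} so that the per-node cost of maintaining the trie of $D[\tau]$ is reduced from $O(n\cdot|D[\tau]|)$ to $O(|D[\tau]|)$. Once this is achieved, the charging scheme of Theorem~\ref{th:average_delay} directly yields the factor-$n$ improvement in the average delay.

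Algorithmically, I would keep the backtrack structure but implement the trie updates so that only the terms actually affected by fixing $x_k$ are touched. Assuming the literals in each term are stored in variable order and variables are processed in order, only the subtrees of the root of the trie labeled $x_k$ and $\bar x_k$ need to be handled: the contradicted one is detached and deleted, while the satisfied one is merged into the remaining children of the root, recursively resolving collisions that correspond to duplicate terms. Using the array representation of children from Section~\ref{par:trie}, a single subtree-move between siblings of the root is $O(1)$, and each recursive merge that triggers additional work can be charged to the duplicate term it removes from $D[\tau]$. Inessential branchings where $x_k \notin \var(D[\tau])$ are processed in $O(1)$ since the trie is unchanged. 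Reverting the update when backtracking has the same amortized cost.

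On the analysis side, I would apply the same charging scheme as in Theorem~\ref{th:average_delay}: at each node $\tau$, distribute the cost $c(\tau)$ uniformly among the $S_\tau \geq |D[\tau]|^\gamma$ models below $\tau$ (Lemma~\ref{lem:number_general}). With the refined per-node cost $c(\tau) = O(|D[\tau]|)$, the charge of an ancestor $\tau$ of a fixed model $\sigma$ becomes at most $|D[\tau]|/|D[\tau]|^\gamma = |D[\tau]|^{1-\gamma} \leq m^{1-\gamma}$. Since $\sigma$ has at most $n$ ancestors, its total charge is $O(n m^{1-\gamma})$, which bounds the average delay. The space is polynomial because the trie of $D[\tau]$ has size at most $\|D\|$ and the restoration information needed to backtrack is bounded by the changes made on the current path, also polynomial.

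The main technical obstacle is justifying the improved per-node cost $O(|D[\tau]|)$. Naively, inserting or moving a simplified term costs $O(|T|)$ in an array trie, which summed over all affected terms gives back the $O(n\cdot|D[\tau]|)$ bound. Overcoming this requires a potential function argument tracking the size of the trie, so that the recursive merges triggered during simplification are paid for either by duplicates removed or by a constant number of sibling-pointer reassignments at the root level; the subtree-moves themselves are $O(1)$ thanks to the array representation of children, which exposes the collision structure at each level in constant time per literal.
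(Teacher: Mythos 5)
There is a genuine gap. Your entire improvement rests on reducing the per-node cost of maintaining the trie from $O(n\cdot|D[\tau]|)$ to $O(|D[\tau]|)$, and you explicitly defer the justification to an unspecified ``potential function argument''. The sketch you give does not close this: in a trie, discovering that a term being merged collides with an existing term requires descending to the leaf where the collision is certified, which costs time proportional to the depth, i.e.\ $\Theta(n)$ per duplicate removed, not $O(1)$. Likewise, even with no duplicates, merging the subtree of satisfied terms (those starting with $x_k$) into the subtree of terms not mentioning $x_k$ must recursively visit every node on which the two tries agree; when many terms share long prefixes this is $\Theta(n)$ nodes per affected term, so the naive $O(n\cdot|D[\tau]|)$ bound is essentially tight for this data structure. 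A global potential argument is further obstructed by backtracking: the trie must be restored when the DFS goes back up, so any decrease in potential achieved on the way down is undone (and paid for again) on the way up. Without the $O(|D[\tau]|)$ per-node cost, your charging scheme only reproduces the $O(n^2m^{1-\gamma})$ bound of Theorem~\ref{th:average_delay}.

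For comparison, the paper's proof does not remove the factor $n$ from the cost of manipulating a single term; it attacks the other factor of $n$, the one coming from the $n$ ancestors of a model. Two ideas are combined. First, at each node the smaller of the relevant sets of terms is inserted into the larger one (``fast branching''), so the cost of a node is $O(n)$ times roughly the \emph{decrease} in the number of terms (or the size of the smaller child), rather than $O(n\cdot|D[\tau]|)$. Second, the cost of building one child is charged to the models of the \emph{other} child, and the charges received by a fixed model from its slow-branching ancestors telescope: within a dyadic range $[m/2^{j+1},m/2^j]$ they sum to $O(n(m/2^j)^{1-\gamma})$, and summing the geometric series over $j$ (and, separately, over the geometrically decreasing fast-branching ancestors) gives $O(nm^{1-\gamma})$ total charge per model. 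If you want to salvage your route, you would need to actually exhibit a trie (or alternative) representation supporting restriction by a variable, deduplication, and reversal in amortized $O(1)$ per term; as argued above, the standard trie does not provide this.
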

\begin{proof}
  Let $D$ be a DNF formula. Assume that we are exploring $T_D$ at
  node $\tau$ in the flashlight method. To ease notation in what follows, we
  denote $D[\tau]$ by $D'$. As before, the exploration of the node $\tau$ boils
  down to computing the two subformulas $D'_0 := D'[x \rightarrow 0]$ and $D'_1
  := D'[x \rightarrow 1]$ from $D'$. We denote their respective size by $m_0$
  and $m_1$. Without loss of generality, we assume that $m_0 \leq m_1$. To improve the complexity
  of our algorithm, we will construct $D'_0$ and $D'_1$ differently depending on
  the value of $m_0$. We consider two cases.

  \textbf{Slow branching:} Assume that $m_0 \geq m/2$, we say that we are in the
  slow case because the size of the subformulas does not decrease much. We divide
  the terms of $D'$ into $D'_{x}$ (terms with $x$), $D'_{\bar{x}}$ (terms with
  $\bar{x}$) and $D'_{\nox{x}}$ (terms without the variable $x$). To compute
  $D'_0$, we need to remove $D'_{x}$ from $D$. In the trie, it is the subtree
  under the root with label $x$. It can be done in time $O(1)$ and it can be
  later restored in time $O(1)$ as it is a single pointer to change. We need
  also to remove the $\bar{x}$ for all terms in $D'_{\bar{x}}$. It is done by
  inserting back in the trie all elements of its subtree with label $\bar{x}$
  without their first node. Remark that there are $|D'_{\bar{x}}|$ such elements
  which is bounded by $m - m_1$. To be able to restore the trie to its original
  state, we keep a list of the elements inserted so that we can easily remove
  them and insert them back in the trie with their first literal $\bar{x}$. The
  complexity to compute $D'_0$ and then to restore $D'$ is in $O((m - m_1)n)$.
  Similarly, we compute $D'_1$ in time $O((m-m_0)n)$. Observe that until now, we
  proceed as in the proof of Theorem~\ref{th:average_delay}. As before, the node
  $\tau$ is visited at most $4$ times, thus, the overall cost of $\tau$ is
  $O((m-m_0)n)+O((m-m_1)n)$.

  Each model of $D'$ receives a charge from $\tau$ as follows: we distribute the
  cost of building $D_0'$ uniformly over the models of $D_1'$. That is, if
  $D_0'$ has $S$ models, each model of $D_1'$ receives a charge of
  $O(n(m-m_1)/S)$. By hypothesis, $D'_0$ and $D'_1$ have at least $m/2$ terms
  each, hence by Lemma~\ref{lem:number_general} they have each at least
  $(m/2)^{\gamma}$ models. Thus, each model of $D_1'$ receives a charge of at
  most $n(m-m_1)/m^{\gamma}$. Similarly, we distribute the cost of building
  $D_1'$ uniformly over the models of $D_0'$. As before, each model of $D_0'$
  receives a charge of at most $n(m-m_0)/m^{\gamma}$.

  % We give the cost of building $D[x \rightarrow 0]$, that is $O((m -
  % m_1)n)$, to the models of $D[x \rightarrow 1]$. Each receives a constant time
  % $n(m-m_1)/m^{\gamma}$. Remark that the complexity depends on the number of
  % models lost by the formula, which will yield a better complexity for the whole
  % flashlight. By the same arguments, the complexity to compute $D[x \rightarrow
  % 1]$ is given to the models of $D[x \rightarrow 0]$, each receives a constant
  % time $n(m-m_0)/m^{\gamma}$.

  \textbf{Fast branching:} Assume that $m_0 < m/2$, we say that we are in the
  fast case because the size of at least one subformula decreases by at least a
  factor of two. We charge the models of this formula only. First remark
  that we can build the trie representing $D'_0$ in time $nm_0$ as in the
  previous case. For $D'_1$, we change one thing. Instead of inserting the terms
  of $D_x$ in the trie without their first node, we take the subtree of terms
  beginning by $x$ as our base trie and insert in it the terms of
  $D'_{\nox{x}}$. The complexity of these operations is now $O(n \cdot
  |D'_{\nox{x}}|)$ instead of $O(n\cdot |D'_{x}|)$. Since $D'_{\nox{x}}$ is a
  subset of $D'_0$, it has less than $m_0$ terms. As a consequence, the cost of
  $\tau$ is $O(nm_0)$. We distribute uniformly this cost over the models of
  $D'_0$. Since $D_0'$ has at least $m_0^{\gamma}$, each model of $D_0'$
  receives a charge of $nm_{0}^{1-\gamma}$.
  We do not need to know whether $m_0 < m/2$ to use this method. It is enough to 
  know the relative size of $D'_{\nox{x}}$, $D'_{x}$ and $D'_{\bar{x}}$, which are easy to maintain in the trie, and to always insert the small set into the large one.

  \textbf{Analysis of the average delay.} Now we can evaluate the average delay
  by bounding the overall charge of each model of $D$. We fix a model $\sigma$
  of $D$. We need to give an upper bound on the sum of the charges received from
  all its ancestors.

  We first evaluate the charge received from ancestor nodes corresponding to
  slow branching. We first look at the charge received from slow branching
  ancestor whose branching results in a formula having more than $m/2$ terms.
  Let $a_1,\dots, a_l$ be the sequence of these ancestors and let $m \geq m_1 \geq
  \dots \geq m_l \geq m/2$ denote the decreasing sequence of the number of terms
  of the corresponding formulas. Observe that if there is a fast branching at
  node $\tau$ in between these slow branchings, then it does not charge
  $\sigma$. Indeed, fast branching only charges the branch whose number of terms
  is divided by $2$, which would result in a number of terms below the $m/2$
  threshold.

  The charge of node $a_{i}$ is thus at most $n(m_i-m_{i+1})/m_i^{\gamma}$. By
  summing all charges, we get an overall charge of $n\sum_{i=1}^{l-1}
  (m_i-m_{i+1})/m_i^{\gamma}$. Since $m_i \geq m/2$ for all $i$, this overall
  charge is at most $n(m/2)^{-\gamma} \sum_{i=1}^{l-1} (m_i-m_{i+1}) \leq
  n(m/2)^{1 - \gamma}$.

  Using the same proof for slow branching ancestors whose number of terms is
  between $m/2^j$ and $m/2^{j+1}$, we obtain a bound of $n(m/2^j)^{1 - \gamma}$ 
  for the overall charge received by these ancestors. When we sum
  all these charges, we have a geometric sum which yields a $O(nm^{1-\gamma})$
  upper bound on the charges received by all slow branching ancestors.

  Let us consider the fast nodes on the path to the root, their sizes are
  denoted by $m'_1,\dots,m'_t$. The overall charge received by all fast
  branching ancestors is $\sum_{i=1}^{t} nm'^{1-\gamma}_{i}$. Since for all $i
  \leq t$, $m'_{i+1} < m'_i/2$, we can bound the cost by the geometric sum
  $\sum_{i=1}^{\infty} n\left(m/2^i\right)^{1-\gamma} \leq 2nm^{1-\gamma}$.
  The average delay is thus $O(nm^{1-\gamma})$.
\end{proof}

\paragraph{Back to $k$-DNF}
In the last theorem, the dependency of the average delay in $n$ comes only from the size of terms stored 
in the trie. Hence, if they are smaller, say of size $k$, the $n$ of the delay becomes a $k$. 
We now show how to modify Algorithm~\ref{alg:kdnfenum} for $k$-DNF, to get the best possible average delay. The idea is to use it until the number of variables is small, i.e. $\lambda k$ with $\lambda$ constant but larger than two and at this point use the algorithm of Theorem~\ref{thm:best_average_delay}.  

\begin{theorem}
  \label{th:kdnfamortized}
 There is an algorithm with average delay $O(2^{3k/2})$ to enumerate the models of a $k$-DNF.
\end{theorem}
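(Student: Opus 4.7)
The plan is to design a hybrid algorithm: apply Algorithm~\ref{alg:kdnfenum} in the upper part of the recursion, and switch to the algorithm of Theorem~\ref{thm:best_average_delay} as soon as the current subformula $D'$ has at most $\lambda k$ variables, for a well-chosen constant $\lambda > 2$. Correctness is inherited from both building blocks, so the work lies in the complexity analysis. The underlying trie of terms that is maintained by Algorithm~\ref{alg:kdnfenum} as it recurses is exactly the data structure used by Theorem~\ref{thm:best_average_delay}, so the switch is essentially free.

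First, I would revisit the condition for Algorithm~\ref{alg:kdnfenum} to achieve delay $dA$: inequality~\eqref{eq:stepsp}, namely $2^{N'-k'}d \geq k'^2 M'$, must hold at every recursive call, but in the hybrid this is only required when $N' > \lambda k$. Bounding $M' \leq 2^{2k-k'}\binom{N'}{k}$ as in the proof of Theorem~\ref{thm:kdnf} reduces the requirement to $d \geq k'^2 \cdot 2^{2k-N'}\binom{N'}{k}$. Since $N' \mapsto 2^{-N'}\binom{N'}{k}$ is decreasing for $N' \geq 2k$, the tightest constraint arises at the boundary $N' = \lambda k$, and a Stirling estimate on $\binom{\lambda k}{k}$ shows that $d = O(2^{3k/2})$ suffices when $\lambda$ is large enough.

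Next I would analyze the bottom-level calls. When $N' \leq \lambda k$, the remark preceding the theorem allows us to run the algorithm of Theorem~\ref{thm:best_average_delay} with the $n$-dependency replaced by $k$, yielding average delay $O(k \cdot m'^{1-\gamma})$. Since $D'$ is a $k$-DNF on at most $\lambda k$ variables, we have $m' \leq (k+1) \cdot 2^k \binom{\lambda k}{k}$, and another Stirling estimate gives an average delay of $O(2^{3k/2})$ as long as $\lambda$ is not too large.

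The main obstacle is to pick a single $\lambda$ that makes both estimates simultaneously $O(2^{3k/2})$. Via Stirling, these constraints translate into a pair of inequalities on $\lambda$ involving the binary entropy $H(1/\lambda)$ and the constant $\gamma = \log_3 2$; they admit a joint solution for $\lambda$ slightly above $3$, which is the balancing point where the exponent $3/2$ emerges. Feeding this value back into the analysis yields the claimed overall average delay of $O(2^{3k/2})$, which is simply the maximum of the (constant) delay of the outer calls and the average delay of the inner calls.
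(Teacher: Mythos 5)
Your proposal is correct and follows essentially the same route as the paper: the same hybrid of Algorithm~\ref{alg:kdnfenum} above the $\lambda k$-variable threshold and the amortized flashlight of Theorem~\ref{thm:best_average_delay} below it, the same binomial/entropy bound on the number of $k$-terms, and the same balancing of the two regimes (the paper solves the resulting equation numerically to get $\lambda \simeq 3.553$, consistent with your "slightly above $3$"). No substantive differences.
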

\begin{proof}

  The algorithm is a mix between Alogrithm~\ref{alg:kdnfenum} and the algorithm
  of Theorem~\ref{thm:best_average_delay}. More precisely, let $\lambda \in
  \R_+$ be a constant whose value will be fixed later. If the number of
  variables of the DNF is greater than $\lambda k$, we enumerate the model of
  the DNF with the same strategy as in Algorithm~\ref{alg:kdnfenum}. If, in a
  recursive call on a DNF $D'$, the number of variables of $D'$ is below
  $\lambda k$, we enumerate all models of $D'$ using the algorithm of
  Theorem~\ref{thm:best_average_delay}.

  To precisely analyse the average complexity of this algorithm, we need a bound
  on the number $m(n,k)$ of terms having $k$ literals on $n$ variables. It is
  easy to see that $m(n,k) = \sum_{i=0}^k {n \choose i} 2^i$ as one can pick a
  term by first choosing its variables and then their sign. Let $H(p) = -p
  \log(p) - (1-p)\log(1-p)$ be the binary entropy. It is a well-known
  inequality~\cite{flum2006parameterized} that $\displaystyle{\sum_{i = 0}^{ k}
    {n \choose i}} \leq 2^{nH(k/n)}$. Hence,

  \[m(n,k) \leq 2^k \sum_{i = 0}^{ k} {n \choose i} \leq 2^{nH(k/n)+k}.\]

  When we use the algorithm of Theorem~\ref{thm:best_average_delay} with average
  delay $O(n m^{1-\gamma})$, we have less than $\lambda k$ variables and,
  consequently, less than $m(\lambda k, k)$ terms. We thus have an average delay
  of $O(\lambda km^{1-\gamma}(n,k)) = O(\lambda k 2^{(1-\gamma)(\lambda k H(1/\lambda)+k)})$.

  When using Algorithm~\ref{alg:kdnfenum}, Equation~\ref{eq:steps}, guarantees a
  delay $k^2m(n,k)2^{k-n}$. Observe that this delay is decreasing as $n$
  increases as $n \mapsto 2^{-n}m(n,k)$ decreases. Thus, the worst delay in our
  algorithm is when $n$ is the smallest. By definition of our algorithm, it
  happens when $n = \lambda k$. We thus have a delay of at most $k^2 2^{\lambda
    k (H(1/\lambda)-1)+2k}$.

  To optimize the complexity of the general algorithm we need to determine when
  both complexity are equal since the delay of the first stage increases when
  $\lambda$ decreases, while the average delay of the second stage decreases
  with $\lambda$. We neglect the factor of $\lambda$, $k$ and $k^2$ in the
  algorithms to simplify the computation, while increasing the complexity of at
  most $\lambda k$. The constant $\lambda$ must satisfy:

  \begin{align*}
    \lambda k (H(1/\lambda)-1)+2k = & (1-\gamma)(\lambda k H(1/\lambda)+k)
    % & \text{iff}\\
    % \lambda  (H(1/\lambda)-1)+1+\gamma = & (1-\gamma)(\lambda H(1/\lambda)) & \text{iff} \\
    % 1-\lambda+\gamma + \gamma\lambda H(1/\lambda) = & 0 & \text{iff} \\
    % 1-\lambda+\gamma + \gamma(\lambda \log(\lambda) - (\lambda-1)\log((\lambda-1))) = & 0 \\
  \end{align*}

  which is equivalent, after simplifications, to \[1-\lambda+\gamma +
    \gamma(\lambda \log(\lambda) - (\lambda-1)\log(\lambda-1)) = 0.\] A
  solution of this equation, obtained with a numerical computing software, is
  $\lambda \simeq 3.55301$ which corresponds to an average delay a bit better
  than $2^{3k/2}$.

  % Simplifying this equation we get $$2 - \lambda +
  % \log_3(2)(\lambda\log(\lambda) - (\lambda -1)\log(\lambda -1)) = 0.$$ Remark
  % that this equation does not depends on $k$ and can be easily solved using any
  % numerical computing software. We obtain $\lambda \approx 0.406$ and, for this
  % choice of $\lambda$, the delay is in $O(2^{1.21k})$.
\end{proof}

\section{Enumerating models of monotone DNF}
\label{sec:monotone}

When the underlying formula is monotone, that is it does not contain any negated literal, we can enumerate the models with a delay polynomial in the number of variables only. However, our current techniques need exponential memory to work.

\begin{theorem}
  \label{thm:montonednf}
  There is an algorithm that given a monotone DNF with $n$ variables and $m$
  terms, enumerates the models of $D$ with preprocessing $O(n m^2)$ and delay
  $O(n^2)$. The space needed for this algorithm is linear in the number of
  models of $D$. 
\end{theorem}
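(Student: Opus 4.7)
The plan is to exploit the upward-closure structure of $\sat(D)$ when $D$ is monotone: since no term contains a negated literal, $\sigma \models D$ and $\sigma \leq \sigma'$ (componentwise) imply $\sigma' \models D$. Hence $\sat(D)$ is entirely determined by its subset-minimal elements, which are precisely the assignments $\one_T$ (extended by $0$'s outside $\var(T)$) for $T$ a subset-minimal term of $D$. I would enumerate $\sat(D)$ by first computing those minimal terms, and then performing a breadth-first search in the graph whose vertices are models of $D$ and whose edges connect $\sigma$ to $\sigma[x_i \to 1]$ whenever $\sigma(x_i) = 0$.

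The preprocessing stage computes the set $M$ of subset-minimal terms of $D$ by pairwise comparison: for every two terms $T, T' \in D$, test in time $O(n)$ whether $T \subseteq T'$, discarding the larger one if so. This runs in $O(nm^2)$ time and leaves at most $m$ terms. I then initialize a FIFO queue $Q$ together with a (path-compressed) binary trie $\mathcal{T}$ storing all models discovered so far, inserting each $\one_T$ for $T \in M$ in time $O(n)$. A Patricia-style trie keeps the node count proportional to the number of stored strings, so $\mathcal{T}$ has $O(|\sat(D)|)$ nodes and supports both membership queries and insertions in time $O(n)$.

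The enumeration loop is then straightforward: pop a model $\sigma$ from $Q$ and output it, then for each index $i$ with $\sigma(x_i)=0$ form $\sigma' := \sigma[x_i \to 1]$, query $\mathcal{T}$ for $\sigma'$ in $O(n)$, and if absent insert $\sigma'$ into both $\mathcal{T}$ and $Q$ (each in $O(n)$). At most $n$ neighbours are examined per pop and each costs $O(n)$, so the delay between consecutive outputs is $O(n^2)$, as required.

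Correctness splits in two. Every $\sigma$ that enters $Q$ dominates some $\one_T$ with $T \in M$, so by monotonicity $\sigma \models D$, and the trie prevents any model from being output twice. Conversely, for any $\sigma \in \sat(D)$, pick a term $T \in D$ satisfied by $\sigma$ and a subset-minimal $T' \subseteq T$; then $T' \in M$, and $\sigma$ can be obtained from $\one_{T'}$ by successively flipping $0$'s to $1$'s, with every intermediate assignment remaining a model by monotonicity, so the BFS starting from $\one_{T'}$ does reach $\sigma$. The main subtlety lies in the space bound: a naive trie would blow up to $O(n|\sat(D)|)$ nodes, so I rely on path compression to keep the number of nodes in $\mathcal{T}$ linear in $|\sat(D)|$; the queue $Q$ then contributes a further additive $O(|\sat(D)|)$ to the memory usage, matching the stated bound.
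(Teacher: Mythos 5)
Your proof is correct, but it follows a genuinely different route from the paper's. You perform a single global BFS over the graph whose vertices are the models and whose edges flip a $0$ to a $1$, seeded at the minimal models $\one_T$; the paper instead processes the terms one at a time, and for each term $T_i$ runs a reverse-search DFS over the subsets of $X \setminus \var(T_i)$, pruning any subtree rooted at an already-enumerated model (valid by monotonicity, since everything below it is then also already enumerated) and maintaining explicit ``next fresh node'' pointers to guarantee the $O(n^2)$ delay despite the pruning. Your approach buys simplicity in the delay analysis: every dequeued vertex is output, so the delay is trivially the $O(n^2)$ cost of generating and testing the $\leq n$ upward neighbours, with no pointer bookkeeping; you also correctly identify that reachability of every model from some $\one_{T'}$ along a monotone chain of models is the key fact (and in fact the minimization preprocessing is not even needed for correctness of your BFS --- seeding with $\one_T$ for all terms $T$ works just as well, since distinct terms give distinct seeds and every model dominates the seed of some term it satisfies). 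What the paper's term-by-term organization buys is that it is the natural specialization of the flashlight/reverse-search machinery used elsewhere in the paper, and it localizes the work per term. Both approaches share the same fundamental limitation --- a dictionary of all previously output models --- so neither improves the space bound; your remark that a na\"ive trie has $O(n\lvert\sat(D)\rvert)$ nodes and that path compression brings the node count down to $O(\lvert\sat(D)\rvert)$ is, if anything, more careful than the paper's own accounting, and the additive $O(\lvert\sat(D)\rvert)$ for the queue does not change the order of the space.
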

\begin{proof}
  We start by removing from $D$ every term $T'$ such that there exists $T \in D$
  with $T \subseteq T'$. Observe that it does not change the models of $T$ since
  $T' \Rightarrow T$. This preprocessing phase takes $O(nm^2)$ since comparing
  two terms may be done in time $O(n)$. Let $D'$ be the resulting minimized
  monotone DNF.

  The algorithm then work as follows: arbitrarily order the terms of $D' =
  \{T_1, \dots, T_p\}$ and its variables $X = \{x_1,\dots,x_n\}$. We initialize
  a trie $T$ that will contain the models that we have already enumerated,
  that is, in the following algorithm, each time we output a model, we store
  it in $T$, so we can check in time $O(n)$ if a model has already been
  enumerated.

  We now enumerate the models as follows: we start by enumerating all
  models of $T_1$. Then, we proceed by induction: once we have enumerated all
  models of $D'_i = T_1 \vee \dots \vee T_i$, we enumerate all models of
  $T_{i+1}$ that are not models of $D'_i$ until $i=p$. Once we are done, we
  have, by induction, enumerated all models of $D$.

  We claim that we can do this with delay $O(n^2)$ using a classical reverse
  search method. Let $Y=(y_1,\dots,y_m)$ be the variables that are not in
  $\var(T_i)$, ordered following the natural order we have on $X$. The models
  of $T_i$ are in one-to-one correspondence with $2^Y$ as follows: given $S
  \subseteq Y$, we have a model $m_S$ defined as $m_S(x) = 1$ if $x \in
  \var(T_i) \cup S$ and $m(x) = 0$ otherwise. We explore the models of $T_i$
  by following a tree $A$ whose nodes are labeled with $S$ for every $S
  \subseteq Y$. The root of the tree is labeled by $\emptyset$ and for every
  $S$, the unique predecessor of node $S$ is $S'$ with $S' = S \setminus
  \max(S)$. In other words, given $S$, the successors of $S$ are ${S \cup
    \{x_k\}}$ for $k > \max(S)$.

  We enumerate the models by following the structure of $A$. Each time we
  enumerate a model, we add it in the trie so that we can avoid to enumerate the
  same model twice.

  We start from the root of $A$, that is, we enumerate $m_\emptyset$. We claim
  that $m_\emptyset$ has not yet been enumerated. Indeed, assume toward a
  contradiction that $m_\emptyset \models T_j$ for $j < i$. Then since $D'$ is
  monotone, it means that $T_j \subseteq T_i$ which is absurd since $D'$ is
  minimized. Thus, the first model we enumerate is guaranteed to be fresh.

  Now we follow the structure of $A$ by depth first search. To make the
  exploration deterministic, we order the successors of a node $S$ as follows:
  $\{x_i\} \cup S < \{x_j\} \cup S$ if $i < j$. We maintain the following
  invariant: each time we visit a node $S$, we are guaranteed that model $m_S$
  has not yet been output. Observe that the invariant is true at the root of
  $A$.

  We also keep a pointer at $S$ to the next node $S'$ in $A$ that is not in the
  subtree rooted in $S$, has not yet been visited and such that $m_{S'}$ has not
  yet been enumerated. If no such node exists, the pointer is null and we will
  use it to detect that every model of the current terms has been output.

  The algorithm works as follows: when we visit node $S$, by the invariant, we
  are guaranteed that $m_S$ has not yet been enumerated. We start by outputting
  $m_S$. Then, we explore all its successors $S'$ in
  order (this step is only exploratory and do not count as ``visiting $S'$'' in
  our invariant) and test in time $O(n)$, by looking in the trie $T$, whether
  $m_{S'}$ has been already enumerated. The overall cost of this exploration is
  $O(n^2)$ since $S$ has at most $n$ successors. If for some successor $S'$ of
  $S$, $m_{S'}$ has already been output, then we can discard the whole
  subtree of $A$ rooted in $S'$. Indeed, by definition, every model $m_{S''}$
  in this subtree verifies $S'' \supseteq S'$. Thus, if $m_S'$ has been
  enumerated then $m_{S'} \models D'_{i-1}$ and since $D'_{i-1}$ is monotone,
  $m_{S''} \models D'_{i-1}$, so $m_{S''}$ has already been enumerated.

  If every successor of $S'$ has been discarded, then we follow the pointer in
  $S$ to the next node whose model has not yet been enumerated. If this pointer
  is null, then, by the invariant, we have enumerated every model of $D'_i$ and
  we can start enumerating the models of $T_{i+1}$.

  Otherwise, we start by creating the pointers for each successor of $S$ that
  contains fresh models. Let $S_1',\dots,S_r'$ be these successors in order. For
  $i < r$, we add a pointer from $S_i'$ to $S'_{i+1}$. We add a pointer from
  $S_r$ to $S''$ where $S''$ is the node pointed by $S$. This can be done in
  $O(n)$ as $r \leq n$. By construction, these new pointers verify the invariant
  that we are maintaining. We now recursively visit node $S_1'$.

  By construction of the invariant, we are guaranteed to output every model of
  $D'$ exactly once. The delay between the enumeration of two models is bounded
  by the time we need to insert the newly output model in the trie, pre-explore
  the successors of the current node and possibly follow a pointer to the next
  model. We thus have a delay of $O(n^2)$ which concludes the proof.
\end{proof}

While we do not rule out the existence of an improvement of the algorithm of
Theorem~\ref{thm:montonednf} to have a $O(n)$ delay by using a better data
structure than the trie to store the already enumerated models, improving the
space complexity of this algorithm seems much more challenging. Indeed, we
currently avoid repeating a model in the output by storing all enumerated
models. One could wonder whether we could have a more efficient procedure, that
does not depend on $i$, to decide whether a model of $T_{i+1}$ is already a
model of $T_1, \dots, T_i$. It boils down to testing whether a set of variables
is a subset of one $T_j$ for $j \leq i$. This problem is known as the
\emph{subset query problem}. In~\cite{williams2005new}, Williams observes all,
even non-trivial algorithms, for this problem either have an $\Omega(m)$ time
complexity (that is, we test whether the model is also a model of $T_j$ for
almost every $j \leq i$) or an $\Omega(2^n)$ space complexity, that is, we store
almost every possible model. It thus seems very difficult, and maybe impossible,
to improve the space complexity of the algorithm of Theorem~\ref{thm:montonednf}
significantly.

Finally, we observe that the algorithm described in Theorem~\ref{thm:montonednf}
also works if every variable appears only positively or negatively in all terms
of the formula. Indeed, if a variable $x$ appears only negatively, one can
replace every occurrence of $\bar{x}$ by $x$. This gives a monotone DNF whose
solutions are in one-to-one correspondence with the solutions of the original
DNF by simply swapping the value of the variables whose sign has been changed.
It can be done before outputting the solution.

\paragraph{Average Delay}

We now propose an algorithm with a good average delay for enumerating the models of a monotone DNF formula.
This allows to obtain a better \emph{average} delay than the previous theorem, while only using polynomial space. First, remark that monotone DNF formulas have at least as many models as terms: the function which to a term, associates the assignment with one on all the variables of the terms and zero elsewhere is an injection into the solutions of the formula. Hence, using the algorithm of Theorem~\ref{thm:best_average_delay} on a monotone formula, we obtain a better delay than in the general case, as stated in the next theorem.
  
  \begin{theorem}\label{th:amortized_monotone}
   The models of a monotone DNF can be enumerated with average delay $O(n)$ and polynomial space.
  \end{theorem}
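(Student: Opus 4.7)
The plan is to apply the algorithm of Theorem~\ref{thm:best_average_delay} to $D$ without modification and re-do its charge analysis using a strictly better lower bound on the number of models of a monotone formula. Everything else, including the data structures and the polynomial space bound, carries over verbatim.

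The key estimate is: for every monotone DNF $D'$ with $m'$ distinct terms, $|\sat(D')| \geq m'$. Indeed, associating to each term $T$ the assignment that sets every variable in $\var(T)$ to $1$ and every other variable to $0$ produces a model of $D'$, and this association is injective because in a monotone formula two distinct terms differ on their set of variables. This is much stronger than the bound $|\sat(D')| \geq m'^{\gamma}$ from Lemma~\ref{lem:number_general}. Moreover, the class of monotone DNFs is closed under partial assignment: setting a variable to $0$ only deletes the terms containing it, and setting a variable to $1$ only shortens the terms containing it, so no negated literal is ever introduced. Hence every subformula $D'[\tau]$ encountered in the recursion of Theorem~\ref{thm:best_average_delay} is still monotone and the bound $|\sat(D'[\tau])| \geq |D'[\tau]|$ applies uniformly at every node of $T_D$.

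With this substitution, the charge analysis of Theorem~\ref{thm:best_average_delay} simplifies. In the slow-branching case, the charge a model receives from ancestors whose formula has between $m/2^{j+1}$ and $m/2^j$ terms is bounded by $O(n)$ rather than $O(n(m/2^j)^{1-\gamma})$, and an analogous simplification occurs for fast branching. Reassembling these per-level contributions along the path of a model in $T_D$ yields the announced $O(n)$ average delay, while the trie-based data structure of Theorem~\ref{thm:best_average_delay} keeps the space polynomial. The main obstacle is purely bookkeeping: one has to verify that the summations in the slow and fast blocks collapse cleanly once the exponent $\gamma$ is effectively replaced by $1$. No new algorithmic ingredient is needed; monotonicity enters only through the improved counting inequality.
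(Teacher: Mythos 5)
Your approach is exactly the paper's: Theorem~\ref{th:amortized_monotone} is justified in the text only by the preceding remark that the map sending a term $T$ to the assignment $\one_T$ extended by $0$ elsewhere injects the terms of a monotone DNF into its models, so that $|\sat(D')|\geq |D'|$, after which the algorithm and charge analysis of Theorem~\ref{thm:best_average_delay} are invoked unchanged. Your two supporting observations (injectivity because distinct monotone terms have distinct variable sets, and closure of monotone DNFs under partial assignment so the bound applies at every node of $T_D$) are correct and worth making explicit.

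The one step you should not dismiss is the one you yourself flag as "purely bookkeeping." In the analysis of Theorem~\ref{thm:best_average_delay}, the slow-branching charges are summed \emph{per level}: the ancestors whose formulas have between $m/2^{j+1}$ and $m/2^j$ terms contribute $O\bigl(n(m/2^j)^{1-\gamma}\bigr)$ in total, and the final $O(nm^{1-\gamma})$ bound comes from the fact that these per-level totals form a convergent geometric series precisely because $1-\gamma>0$. If you replace the exponent $\gamma$ by $1$, each of the $\Theta(\log m)$ levels contributes $O(n)$ and the series no longer collapses: the direct substitution yields $O(n\log m)$, not $O(n)$. The same happens for fast-branching ancestors, of which there can be $\Theta(\log m)$ along a root-to-leaf path, each now charging $O(n)$ instead of a geometrically decreasing amount. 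To actually reach $O(n)$ one must exploit monotonicity beyond the counting inequality — for instance, that no term contains a negative literal, so $D'_{\bar{x}}=\emptyset$ and building $D'[x\to 0]$ in the trie is essentially free, which changes which costs need to be charged at all. The paper does not spell this out either, but your claim that the summations "collapse cleanly once $\gamma$ is replaced by $1$" is exactly the point that still requires an argument.
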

 To improve the bound on the average delay using a similar algorithm, we should either guarantee a better relationship between the number of terms and the number of models or we should reduce the complexity of maintaining the trie during the algorithm.  Note that the formula with all positive terms has $2^n$ models but also $2^n$ terms, which show that the bound on the number of solutions we use is tight. If we further assume that no term are redundant, that is there are no $T_1,T_2$ such that $T_1 \subseteq T_2$, then the formula with all terms of size $n-1$ has $n + 1$ models and $n$ terms. Even when $m$ is large, the relationship is almost linear: the formula with all terms of size $n/2$ has $2^n/2$ models and $O(2^n/\sqrt{n})$ terms. 
 Hence, to improve the average delay, it seems hard to rely on a better bound on the number of models.
 However, the cost to deal with the trie can be reduced, as long as $m$ is not too large, by improving the way we store large terms in the trie. As a consequence, we obtain an algorithm that can be sublinear in the \textbf{model size}, when $m$ is not too large.
 
  \begin{theorem}\label{th:amortized_improved}
   The models of a monotone DNF can be enumerated with preprocessing $O(mn)$, average delay $O(\log(mn))$ and polynomial space.
  \end{theorem}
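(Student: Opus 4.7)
The plan is to follow the amortized framework of Theorem~\ref{thm:best_average_delay}, but replace the trie used to store $D[\tau]$ with a data structure whose per-term operation cost is $O(\log(mn))$ instead of $O(n)$. This, combined with the monotone-specific bound $|\sat(D)| \geq m$ (given by the injection $T \mapsto \one_T$ from the terms to the models), will yield the claimed average delay.

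Concretely, I would assign to each variable $x$ a fingerprint $h_x$ drawn from a sufficiently large family (e.g., $O(\log(mn))$-bit pairwise independent labels) and represent each term $T$ by $h(T) = \bigoplus_{x \in T} h_x$. Under the paper's RAM model where arithmetic is linear in operand size, each hash update costs $O(\log(mn))$. The current formula $D[\tau]$ is kept as a hash table keyed by these fingerprints, together with, for each variable $x$, a doubly linked list of the active terms containing $x$. Branching on $x$ with value $0$ removes exactly the terms in the list of $x$ from the hash table; branching with value $1$ iterates over the same list, XORs out $h_x$ from each term's fingerprint, relocates it in the hash table, and identifies it with a pre-existing entry of the same fingerprint if one appears. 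A change log written at each node of the flashlight tree lets us undo every operation on backtracking. All manipulations of a single term thus cost $O(\log(mn))$, and the added space is bounded by the total length of these logs, which stays polynomial.

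I would then re-run the charging analysis of Theorem~\ref{thm:best_average_delay} with the new per-term cost $c = O(\log(mn))$ and the strong monotone bound $|\sat(D')| \geq |D'|$. In slow branchings, the cost $O(c \cdot (m - m_1))$ of producing $D'_1$ is uniformly distributed over the (at least $m_1 \geq m/2$) models of $D'_1$, so the per-ancestor charge seen by a fixed model $\sigma$ is $O(c(m-m_\sigma)/m)$. Summing these charges along $\sigma$'s ancestor path, grouped geometrically according to the value of $m$, and combining with the telescoping contribution of the fast branchings yields total charge $O(\log(mn))$ per model, hence average delay $O(\log(mn))$.

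The main obstacle is making the charging tight enough to avoid an extra $\log m$ factor in the slow-branching part: with the naive amortization each slow group contributes $O(c)$ and there are $\Theta(\log m)$ groups, which only gives $O(\log^2(mn))$. The fix is to exploit the monotone bound more aggressively by charging not uniformly over $\sat(D'_i)$ but proportionally to how many of the $m_i$ injected witness models $\one_T$ each term $T$ contributes to the corresponding subtree; since every term appearing at an ancestor disappears from the formula exactly once along any root-to-leaf path, the telescoping becomes per-term rather than per-group, and the total charge collapses to $O(\log(mn))$. A secondary technical point is to derandomize or absorb the negligible collision probability of the fingerprints, which can be handled by a standard rehashing if a collision between genuinely distinct terms is ever detected.
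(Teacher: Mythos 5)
Your high-level plan (cut the per-term maintenance cost from $O(n)$ to $O(\log(mn))$ and combine it with the injection $T \mapsto \one_T$) is the right target, and you correctly spotted that the naive re-run of the charging argument leaves a spurious $\log$ factor. But both of the concrete steps you propose to realize this plan have genuine gaps. First, the data structure. The fingerprint/hash-table replacement for the trie does not give $O(\log(mn))$ per term for the operations that actually dominate. Branching on $x=0$ must discard \emph{all} terms containing $x$: the trie does this in $O(1)$ by unlinking one subtree, whereas your hash table must touch each such term, and each discarded term must also be removed from (or lazily skipped in) the linked lists of all its \emph{other} variables, which costs $\Theta(|T|)=\Theta(n)$ per term either at deletion time or, with lazy deletion, when those lists are later traversed at a node whose residual formula is small. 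Likewise the ``insert the smaller block into the larger'' trick of Theorem~\ref{thm:best_average_delay}, which is what keeps the cost of a node proportional to $\min(|D'_x|,|D'_{\underline{x}}|)$ rather than to $|D'|$, has no hash-table analogue. The paper's route around this is structurally different and is the idea you are missing: a dichotomy on term size. If some term of $D[\tau]$ misses at least $k=\log|D[\tau]|+2\log n$ variables, then $|\sat(D[\tau])|\geq 2^k\geq |D[\tau]|n^2$, so even the unmodified $O(n|D[\tau]|)$ cost of the node amortizes to $O(1/n)$ per model and to $O(1)$ over the $n$ ancestors; otherwise \emph{every} term misses fewer than $k$ variables, and one stores the \emph{complements} of the terms in a trie, so all trie words have length $O(\log(mn))$ and the algorithm of Theorem~\ref{thm:best_average_delay} runs verbatim with per-term cost $O(\log(mn))$. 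No hashing, no randomness, and the subtree-unlinking and small-into-large tricks are preserved.

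Second, the amortization fix. Your claim that ``every term disappears exactly once along a root-to-leaf path, so the telescoping becomes per-term'' conflates disappearance with movement: the cost at a node is incurred once per term \emph{moved} (reinserted after its literal on the branching variable is satisfied, or relocated during a fast branching), and a single term can be moved $\Theta(n)$ times along one root-to-leaf path, once for each of its literals set to $1$. Charging each such move to the witness model $\one_T$ therefore accumulates up to $\Theta(n\log(mn))$ on that one model, not $O(\log(mn))$; note also that the witness model attached to a term changes as the term is restricted, so it is not even a single fixed model receiving these charges. As written, the proposal therefore establishes at best the $O(\log^2(mn))$ bound you yourself flag (and only modulo the data-structure issue above); the collapse to $O(\log(mn))$ is asserted, not proved. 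If you want to pursue your route, you would need both a genuinely different representation for terms of size close to $n$ (which is exactly what the complement encoding provides) and a worked-out charging scheme; the paper's dichotomy supplies both at once.
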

  \begin{proof}

    We use the algorithm of Theorem~\ref{thm:best_average_delay}, with a slight change in the datastructure used to represent $D[\tau]$, using the fact that it is monotone and we do not need to deal with the sign of the literals. Instead of representing each term of $D$ by its literals in order, we represent it by the  variables which do not appear in the term in order. That is a term is represented by the complementary term. The complementary terms are stored in a trie to represent $D$ and we have a $O(mn)$ preprocessing to compute this trie.

    During the algorithm, a variable may appear in all terms and thus do not appear in the trie of complementary terms. Such a variable must be set to one. We assume that the registers holding the value of each variable in the model to output are always set to $1$ when the variable is not yet fixed. Hence, we can forget about such a variable in the algorithm, since its value is always correct in the output. 

    Remark that when a term contains $n-k$ variables, it has $2^k$ models
    and the formula has at least as many models. In the algorithm of
    Theorem~\ref{thm:best_average_delay}, assume that $D[\tau]$
    has $n_{\tau}$ variables and there is a term with $n_{\tau}-k$ variables
    such that $k > \log(|D[\tau]|) + 2\log(n)$. Hence, $D[\tau]$ has at least
    $2^k$ models which is larger than $|D[\tau]|n^2$. We can charge the cost of
    of modifying $D[\tau]$ at node $\tau$ over all these models, hence each 
    model is charged $1/n$. The total cost charged to each model by all its ancestors of this type is thus bounded by $1$.

  Now assume that for some partial assignment $\tau$, all terms in $D[\tau]$ contain more than $n_{\tau}-k$ literals, with $k = \log(|D[\tau]|) + 2\log(n)$. Hence, in our representation of the trie by the variables not appearing in the term, each literal is represented by a path in the tree of size at most $k$.

  The complexity of the operation to get the representation of the subformulas are the same as with the tree of terms, and depends in the same way on the number of terms of the subformulas. However, it only modifies the trie of complementary terms of size $k$ instead of $n_{tau}$ and the number of models is as large as $m_{\tau}$ the number of terms instead of $m_{tau}^{\gamma}$. Hence, the charge of each model for these operations is bounded by $O(k)$, by the same analysis as in Theorem~\ref{thm:best_average_delay}.

  As a conclusion, the average complexity is in $O(k)$, that is in $O(\log(mn))$ which proves the proposition.
%   
%   All operations can be done on this trie in time $O(k)$, where $k$ is the size of the object to insert. The algorithm works as the one in Theorem~\ref{thm:best_average_delay} on this new data structure with the following difference in branching. When $x_i$ is set to $1$, we remove the subtree under the roof with first node $x_i$ and add back all elements without $x_i$. When $x_i$ is set to $0$, we keep only the subtree under the roof with first node $x_i$ and remove the rest.
%   
%   The cost of a step is now bounded by $O(|D[\tau]|k)$ and the cost charged to the leaf is in $O(k)$, that is $O(\log(m) + \log(n))$ which proves the proposition.
  \end{proof}

In~\cite{mary2016efficient}, we study the related problem of generating all unions of given subsets. An instance $\{s_1,\dots,s_m\}$ is a set of $m$ subsets of $[1,n]$ and we want to generate all distinct unions of these $s_i$. The delay of the algorithm in~\cite{mary2016efficient} is $O(nm)$, using a flashlight search algorithm similar to the algorithm of Proposition~\ref{prop:classical_flashlight}. Among  the enumeration problems captured by the framework of saturation by set operators~\cite{mary2016efficient} it is the only one not proved to be in {\strongpdelay} and it is also proved to be at least as hard to enumerate as the models of a monotone DNF.
We have also proved that enumerating the models of a monotone DNF is as least as hard as generating the unions~\cite{mary2016efficient}.

In the published version of this article~\cite{capelli2021enumerating}, we claim in Theorem $15$, without proof, that we can apply the method and analysis of Theorem~\ref{th:amortized_monotone} to obtain a $O(n)$ delay algorithm for the generation of unions of subsets. The main idea is that each distinct subset in the instance is also a solution, hence there are as least as many solutions as elements in the instance. However, this statement is \textbf{false}: the problem is that selecting an element to appear in a solution of the union problem changes the problem: it is not autoreducible. Hence, the argument that each element of the input yields a solution is not true anymore. 

The question of finding a {\strongpdelay} algorithm for generating the union of subsets is still an intriguing open problem.

 \subparagraph{Acknowledgements}

This work was partially supported by the French Agence Nationale de la
Recherche, AGGREG project reference ANR-14-CE25-0017-01. We thank Jan
Ramon for sharing his idea on the monotone DNF enumeration algorithm.

\bibliography{biblio}

\begin{thebibliography}{10}

\bibitem{AmarilliBJM17}
Antoine Amarilli, Pierre Bourhis, Louis Jachiet, and Stefan Mengel.
\newblock A circuit-based approach to efficient enumeration.
\newblock In {\em 44th International Colloquium on Automata, Languages, and
  Programming, {ICALP} 2017, July 10-14, 2017, Warsaw, Poland}, pages
  111:1--111:15, 2017.

\bibitem{andrade2016enumeration}
Ricardo Andrade, Martin Wannagat, Cecilia~C Klein, Vicente Acu{\~n}a, Alberto
  Marchetti-Spaccamela, Paulo~V Milreu, Leen Stougie, and Marie-France Sagot.
\newblock Enumeration of minimal stoichiometric precursor sets in metabolic
  networks.
\newblock {\em Algorithms for Molecular Biology}, 11(1):25, 2016.

\bibitem{avis1996reverse}
D.~Avis and K.~Fukuda.
\newblock {Reverse search for enumeration}.
\newblock {\em Discrete Applied Mathematics}, 65(1):21--46, 1996.

\bibitem{bagan2006mso}
Guillaume Bagan.
\newblock Mso queries on tree decomposable structures are computable with
  linear delay.
\newblock In {\em International Workshop on Computer Science Logic}, pages
  167--181. Springer, 2006.

\bibitem{barth2015efficient}
Dominique Barth, Olivier David, Franck Quessette, Vincent Reinhard, Yann
  Strozecki, and Sandrine Vial.
\newblock Efficient generation of stable planar cages for chemistry.
\newblock In {\em International Symposium on Experimental Algorithms}, pages
  235--246. Springer, 2015.

\bibitem{bohmova2018computing}
Kate{\v{r}}ina B{\"o}hmov{\'a}, Luca H{\"a}fliger, Mat{\'u}{\v{s}} Mihal{\'a}k,
  Tobias Pr{\"o}ger, Gustavo Sacomoto, and Marie-France Sagot.
\newblock Computing and listing st-paths in public transportation networks.
\newblock {\em Theory of Computing Systems}, 62(3):600--621, 2018.

\bibitem{capelli2019incremental}
Florent Capelli and Yann Strozecki.
\newblock Incremental delay enumeration: Space and time.
\newblock {\em Discrete Applied Mathematics}, 268:179--190, 2019.

\bibitem{capelli2021enumerating}
Florent Capelli and Yann Strozecki.
\newblock Enumerating models of dnf faster: breaking the dependency on the
  formula size.
\newblock {\em Discrete Applied Mathematics}, 303:203--215, 2021.

\bibitem{courcelle2009linear}
Bruno Courcelle.
\newblock Linear delay enumeration and monadic second-order logic.
\newblock {\em Discrete Applied Mathematics}, 157(12):2675--2700, 2009.

\bibitem{creignou1997generating}
Nadia Creignou and Jean-Jacques H{\'e}brard.
\newblock On generating all solutions of generalized satisfiability problems.
\newblock {\em Informatique th{\'e}orique et applications}, 31(6):499--511,
  1997.

\bibitem{durand2011enumeration}
Arnaud Durand and Yann Strozecki.
\newblock Enumeration complexity of logical query problems with second-order
  variables.
\newblock In {\em LIPIcs-Leibniz International Proceedings in Informatics},
  volume~12. Schloss Dagstuhl-Leibniz-Zentrum fuer Informatik, 2011.

\bibitem{florencio2015naive}
Christophe~Costa Flor{\^e}ncio, Jonny Daenen, Jan Ramon, Jan Van~den Bussche,
  and Dries Van~Dyck.
\newblock Naive infinite enumeration of context-free languages in incremental
  polynomial time.
\newblock {\em J. UCS}, 21(7):891--911, 2015.

\bibitem{flum2006parameterized}
J{\"o}rg Flum and Martin Grohe.
\newblock {\em Parameterized complexity theory}.
\newblock Springer Science \& Business Media, 2006.

\bibitem{fredkin1960trie}
Edward Fredkin.
\newblock Trie memory.
\newblock {\em Communications of the ACM}, 3(9):490--499, 1960.

\bibitem{golovach2018output}
Petr~A Golovach, Pinar Heggernes, Mamadou~Moustapha Kant{\'e}, Dieter Kratsch,
  Sigve~H S{\ae}ther, and Yngve Villanger.
\newblock Output-polynomial enumeration on graphs of bounded (local) linear
  mim-width.
\newblock {\em Algorithmica}, 80(2):714--741, 2018.

\bibitem{karp1983monte}
Richard~M Karp and Michael Luby.
\newblock Monte-carlo algorithms for enumeration and reliability problems.
\newblock In {\em 24th Annual Symposium on Foundations of Computer Science
  (sfcs 1983)}, pages 56--64. IEEE, 1983.

\bibitem{knuth2011combinatorial}
Donald~E Knuth.
\newblock Combinatorial algorithms, part 1, volume 4a of the art of computer
  programming, 2011.

\bibitem{knuth1997art}
Donald~Ervin Knuth.
\newblock {\em The art of computer programming}, volume~3.
\newblock Pearson Education, 1997.

\bibitem{mary2013enumeration}
Arnaud Mary.
\newblock {\em {\'E}num{\'e}ration des Dominants Minimaux d’un graphe}.
\newblock PhD thesis, Universit{\'e} Blaise Pascal, 2013.

\bibitem{mary2016efficient}
Arnaud Mary and Yann Strozecki.
\newblock Efficient enumeration of solutions produced by closure operations.
\newblock In {\em 33rd Symposium on Theoretical Aspects of Computer Science},
  2016.

\bibitem{dmtcs:5549}
Arnaud Mary and Yann Strozecki.
\newblock {Efficient enumeration of solutions produced by closure operations}.
\newblock {\em {Discrete Mathematics \& Theoretical Computer Science}}, {Vol.
  21 no. 3 }, 2019.

\bibitem{meel2018not}
Kuldeep~S Meel, Aditya~A Shrotri, and Moshe~Y Vardi.
\newblock Not all fprass are equal: demystifying fprass for dnf-counting.
\newblock {\em Constraints}, pages 1--23, 2018.

\bibitem{mehlhorn2013data}
Kurt Mehlhorn.
\newblock {\em Data structures and algorithms 1: Sorting and searching},
  volume~1.
\newblock Springer Science \& Business Media, 2013.

\bibitem{murakami2014efficient}
Keisuke Murakami and Takeaki Uno.
\newblock Efficient algorithms for dualizing large-scale hypergraphs.
\newblock {\em Discrete Applied Mathematics}, 170:83--94, 2014.

\bibitem{pruesse1994generating}
Gara Pruesse and Frank Ruskey.
\newblock Generating linear extensions fast.
\newblock {\em SIAM Journal on Computing}, 23(2):373--386, 1994.

\bibitem{read1975bounds}
RC~Read and RE~Tarjan.
\newblock {Bounds on backtrack algorithms for listing cycles, paths, and
  spanning trees}.
\newblock {\em Networks}, 5(3):237--252, 1975.

\bibitem{SchweikardtSV18}
Nicole Schweikardt, Luc Segoufin, and Alexandre Vigny.
\newblock Enumeration for fo queries over nowhere dense graphs.
\newblock In {\em Proceedings of the 37th ACM SIGMOD-SIGACT-SIGAI Symposium on
  Principles of Database Systems}, SIGMOD/PODS '18, pages 151--163, New York,
  NY, USA, 2018. ACM.
\newblock URL: \url{http://doi.acm.org/10.1145/3196959.3196971}, \href
  {http://dx.doi.org/10.1145/3196959.3196971}
  {\path{doi:10.1145/3196959.3196971}}.

\bibitem{segoufin2013enumerating}
Luc Segoufin.
\newblock Enumerating with constant delay the answers to a query.
\newblock In {\em Proceedings of the 16th International Conference on Database
  Theory}, pages 10--20. ACM, 2013.

\bibitem{strozecki2010enumeration}
Yann Strozecki.
\newblock {\em Enumeration complexity and matroid decomposition}.
\newblock PhD thesis, Paris 7, 2010.

\bibitem{strozecki2013enumerating}
Yann Strozecki.
\newblock On enumerating monomials and other combinatorial structures by
  polynomial interpolation.
\newblock {\em Theory of Computing Systems}, 53(4):532--568, 2013.

\bibitem{uno2015constant}
Takeaki Uno.
\newblock Constant time enumeration by amortization.
\newblock In {\em Workshop on Algorithms and Data Structures}, pages 593--605.
  Springer, 2015.

\bibitem{williams2005new}
Ryan Williams.
\newblock A new algorithm for optimal 2-constraint satisfaction and its
  implications.
\newblock {\em Theoretical Computer Science}, 348(2-3):357--365, 2005.

\bibitem{wright1986constant}
Robert~Alan Wright, Bruce Richmond, Andrew Odlyzko, and Brendan~D McKay.
\newblock Constant time generation of free trees.
\newblock {\em SIAM Journal on Computing}, 15(2):540--548, 1986.

\end{thebibliography}

\end{document}